\documentclass{llncs}
\def\Draft{0}

\usepackage[utf8x]{inputenc}
\usepackage{amsmath}
\usepackage{amsfonts}
\usepackage{algorithm}
\usepackage{algpseudocode}
\usepackage{graphicx}
\usepackage{siunitx}
\usepackage{placeins}
\usepackage{lmodern}
\usepackage{array}
\usepackage{booktabs}
\usepackage{amssymb}
\usepackage{multirow}
\usepackage[toc,page,titletoc]{appendix}
\usepackage{breakcites}
\usepackage[hidelinks,colorlinks=true]{hyperref}
\usepackage{tabularx,booktabs,ragged2e} 
\usepackage{tikz}
\usetikzlibrary{matrix,decorations.pathreplacing, calc, positioning}
\usepackage{xifthen}
\usepackage{array}

\newcolumntype{L}[1]{>{\raggedright\let\newline\\\arraybackslash\hspace{0pt}}m{#1}}
\newcolumntype{C}[1]{>{\centering\let\newline\\\arraybackslash\hspace{0pt}}m{#1}}
\newcolumntype{R}[1]{>{\raggedleft\let\newline\\\arraybackslash\hspace{0pt}}m{#1}}

\newcommand{\transp}[1]{{#1}^{ {\intercal} }}
\newcommand{\eqdef}{\triangleq}

\newtheorem{heuristic}{Heuristic}
\newtheorem{notation}{Notation}

\usepackage{textcomp}

\newcommand{\fq}{\mathbb{F}_q}

\newcommand\F{\mathbb{F}}
\newcommand\E{\mathbb{E}}
\newcommand\Z{\mathbb{Z}}
\newcommand\R{\mathbb{R}}

\newcommand\LL{\mathcal{L}}
\newcommand\CC{\mathcal{C}}

\newcommand{\ResSD}{\textsf{ResSD}}
\newcommand{\RegSD}{\textsf{RegSD}}
\newcommand{\ListSVP}{\textsf{List-SVP}}
\newcommand{\ListCVP}{\textsf{List-CVP}}
\newcommand{\SVP}{\textsf{SVP}}
\newcommand{\CVP}{\textsf{CVP}}
\newcommand{\Ball}[1]{\mathcal{B}_{#1}}

\newcommand{\av}{{\mathbf{a}}}
\newcommand{\bv}{{\mathbf{b}}}
\newcommand{\bvs}{\mathbf{b}^*}
\newcommand{\cv}{{\mathbf{c}}}
\newcommand{\ev}{{\mathbf{e}}}

\newcommand{\kv}{{\mathbf{k}}}

\newcommand{\sv}{{\mathbf{s}}}
\newcommand{\uv}{{\mathbf{u}}}
\newcommand{\vv}{{\mathbf{v}}}
\newcommand{\xv}{{\mathbf{x}}}
\newcommand{\yv}{{\mathbf{y}}}

\newcommand{\Am}{{\mathbf{A}}}
\newcommand{\Bm}{{\mathbf{B}}}
\newcommand{\Cm}{{\mathbf{C}}}
\newcommand{\Gm}{{\mathbf{G}}}
\newcommand{\Hm}{{\mathbf{H}}}
\newcommand{\Imat}{{\mathbf{I}}}
\newcommand{\Mm}{{\mathbf{M}}}
\newcommand{\Pm}{{\mathbf{P}}}
\newcommand{\Qm}{{\mathbf{Q}}}
\newcommand{\Rm}{{\mathbf{R}}}
\newcommand{\Nm}{{\mathbf{N}}}

\newcommand{\Um}{{\mathbf{U}}}

\newcommand{\Omat}{{\mathbf{0}}}

\DeclareMathOperator\Vol{Vol}
\DeclareMathOperator\Span{span}
\DeclareMathOperator*\Prob{\mathbb{P}}

\newcommand{\pround}[1]{\left({#1}\right)}

\newcommand{\pset}[1]{\left\{{#1}\right\}}

\newcommand{\hnorm}[1]{\left\vert{#1}\right\vert_{\textsf{H}}}
\newcommand{\enorm}[1]{\left\vert{#1}\right\vert_{\textsf{E}}}
\newcommand{\card}[1]{\textsf{Card}\left(#1\right)}

\author{\'Etienne Burle \and Aleksei Udovenko}

\institute{
SnT, University of Luxembourg
\\\email{firstname.lastname@uni.lu}
}

\if\Draft1
\newcommand{\TODO}[1]{\textcolor{purple}{TODO: #1}}
\newcommand{\AU}[1]{\textcolor{blue!70!black}{\it AU: #1}}
\newcommand{\EB}[1]{\textcolor{green!50!black}{\it EB: #1}}
\else
\newcommand{\TODO}[1]{}
\newcommand{\AU}[1]{}
\newcommand{\EB}[1]{}
\fi

\title{Cross-Paradigm Models of Restricted Syndrome Decoding with Application to CROSS\thanks{This work was funded by the Luxembourg National Research Fund (FNR) via project PQseal C24/IS/18978392.}}

\usepackage{xcolor}

\begin{document}
\maketitle

\begin{abstract}
Restricted Syndrome Decoding (ResSD) is a variant of linear code decoding problem where each of the error's entries must belong to a fixed small set of values. This problem underlies the security of CROSS, a post-quantum signature scheme that is one of the Round~2 candidates of NIST's ongoing additional signatures call. We show that solutions to this problem can be deduced from vectors of a particular structure and a small norm in newly constructed codes, in both Hamming and Euclidean metrics. This allows us to reduce Restricted Syndrome Decoding to both code-based (Regular Syndrome Decoding) and lattice-based problems (Closest Vector Problem, List of Short/Close Vectors), increasing the attack surface and providing new insights into the security of ResSD. We evaluate our attacks on CROSS instances both theoretically and experimentally on reduced parameters.
\keywords{
Post-quantum cryptography
\and Cryptanalysis
\and Restricted Syndrome Decoding
\and Code-based cryptography
\and Lattice-based cryptography
\and CROSS
}
\end{abstract}

\if\Draft1
\setcounter{tocdepth}{4}
\begingroup
\let\clearpage\relax
\tableofcontents
\endgroup
\fi

\section{Introduction}
\label{sec:intro}

\TODO{Check consistent ref/autoref}
\TODO{Search pdf for ? (missing references/ citations), before submission}

Since the discovery of Shor's algorithm \cite{S94a}, the public key cryptosystems that were and are widely used can be broken by a quantum computer. That is why the US state agency NIST launched in 2017 a competitive call for designing new quantum-resistant cryptosystems, which resulted  in three post-quantum standards in August 2024. While one signature is based on hash functions \cite{NISTPQC:SPHINCS+22}, the two others (one encryption and one signature) assume the hardness of lattice-based problems \cite{NISTPQC:CRYSTALS-KYBER22,NISTPQC:CRYSTALS-DILITHIUM22}. There is also another lattice-based signature standard \cite{NISTPQC:FALCON22} which should be coming soon. This led the NIST to diversify post-quantum standards in order to avoid relying on the same kind of problem. This was the aim of the new NIST call for additional signatures in 2023, whose process is currently in the second round.\footnote{\url{https://csrc.nist.gov/Projects/pqc-dig-sig/round-2-additional-signatures}} For encryption schemes, the NIST announced in March 2025 a new post-quantum standard \cite{NISTPQC-R4:HQC22} based on linear codes.

Indeed, code-based problems are strong candidates for building alternative schemes. One of the remaining candidates for the ongoing signature call is CROSS \cite{NISTPQC-ADD-R2:CROSS24} which is based on a variant of the syndrome decoding problem, the cornerstone of code-based cryptography. It consists in solving an underdetermined linear system of equations over a finite field, but with the additional constraint that the number of nonzero variables is bounded. The idea of using this NP-hard problem (\cite{BMT78}) in cryptography is not new, since the first code-based encryption scheme, the McEliece scheme \cite{M78}, was published in 1978. The syndrome decoding problem has been studied a lot for a long time \cite{P62,S88,D91,B97,J01,FS09,BLP11,BJMM12,MO15,BM17,DT17a,BM18,SAC:BCDL19,KL24,EC:LWYY24}, and this is a testimony to its hardness. One of its variants is the \emph{regular syndrome decoding problem}, where the solution is partitioned into consecutive blocks of equal length, with each block containing exactly one nonzero element. Originally introduced in \cite{AFS05}, it has since been the subject of several particular studies \cite{S07,AC:HOSS18,EC:CarCouJou23,EC:BriOyg23,C:EssSan24,KL24,WWYLYZW25}.

CROSS is based on another variation of the problem called the \emph{restricted syndrome decoding problem}. In this case, each of the solution's entries must belong to a fixed small set of values. First introduced in \cite{ARXIV:BBCHTPSW21,AMC:BBCHPSW25} and proved NP-complete in \cite{ARXIV:BBCHTPSW21,ISIT:BBPSWW23}, it has since been studied in \cite{ISIT:BBPSWW23,PKC:BBPSWW24,CiC:BeuBriOyg24} and by CROSS designers themselves \cite{CROSSsec}. One can see that this problem is relatively new, and that is why NIST is looking forward to further analysis of the underlying problems of CROSS \cite{NISTPQC-ADD-R2}, as well as the ZKPoK identification protocol \cite{PKC:BBPSWW24} used by the scheme.

\subsection*{Our contribution}

Our primary goal is to better understand the Restricted Syndrome Decoding (ResSD) problem and expand the surface of analysis techniques applicable to it. Given a parity-check matrix $\Hm\in\fq^{(n-k)\times n}$ and a syndrome $\sv\in\fq^{n-k}$, the problem consists in finding an error $\ev\in\fq^n$ such that $\ev\transp{\Hm} = \sv$ with each entry of $\ev$ belonging to a fixed subset $E$ of $\fq$. Although CROSS has also a version using ResSD \emph{with a subgroup of $E^n$}, in this work we focus on the basic version (called CROSS-R-SDP). The main family of algorithms for solving the original syndrome decoding problem is called ISD (Information Set Decoding), and almost all existing algorithms for solving ResSD are adaptations of ISD. There is an exception in recent work \cite{CiC:BeuBriOyg24} that studies algebraic techniques based on Gr\"obner basis, as well as hybrid attacks. We depart from these methods with several new ideas.

\medskip

\emph{(Reduction of ResSD to regular syndrome decoding)}
The first idea consists, instead of using $\Hm$ directly, in considering the new parity check matrix
\[\Hm'\ =\ \begin{pmatrix}
    r_1\Hm_1 & \dots & r_z\Hm_1 \ \Vert \ \dots & \dots\ \Vert \ r_1\Hm_n & \dots & r_z\Hm_n
\end{pmatrix}\]
where $\Hm_i$ is the $i^{\text{th}}$ column of $\Hm$ and $r_i$ are the elements of $E$. One can immediately see that if $\ev$ was a solution to our original ResSD problem, then if we denote $\ev'$ the vector of length $zn$ which in the $i^{\text{th}}$ $z$-size block has exactly one~1 at the position that corresponds to the value of the $i^{\text{th}}$ coordinate of $\ev$, we will have $\ev'\Hm' = \sv$. So $\ev'$ is a special solution of a regular syndrome decoding instance, because its nonzero values contain only ones. We call this kind of vectors \emph{light-regular} vectors. That is why we add $n$ parity-check equations imposing a sum equal to 1 on each block (similarly to the so-called ``encoding regularity'' from \cite{C:EssSan24}). We obtain a new parity-check matrix $\widetilde{\Hm}\in\fq^{(2n-k)\times zn}$. We adapt some ISD algorithms to this new particular problem, obtaining rather high complexities because of the high rate of the code admitting $\widetilde{\Hm}$ as a parity-check matrix.

\emph{(Reduction of light-regular decoding (and ResSD) to CVP)}
Noticing that light-regular vectors also minimize the Euclidean norm among solutions, we create the lattice containing this code and convert the regular syndrome decoding problem into a closest vector problem. The idea is similar to lattice-based attacks against code-based problems using the Lee metric \cite{EPRINT:HKNSC24}. But the large dimension of our lattice which in addition has many short vectors makes this instance hard to solve. We nevertheless improve the attack by using a ``hybrid'' method consisting in ``guessing'' a certain number of blocks. This reduces the dimension of the lattice, and if we enumerate all vectors close to our new target, the solution will be among them. We combine this already hybrid attack with a ``truncation method'' consisting in guessing of a certain number of coordinates per block, giving better complexities.

\emph{(Direct reductions from ResSD to ListCVP/ListSVP)}
Another idea is based on the fact that given $a,b\in\fq$ with $a\ne0$ and $\bv = (b,\dots,b)$,  we can rewrite $\ev\transp{\Hm} = \sv$ into $(a\ev + \bv)\transp{\Hm} = a\sv+\bv\transp{\Hm}$. If we denote $\Hm' = a\Hm$ and $\sv' = a\sv+\bv\transp{\Hm}$, they both constitute an instance of ResSD with solution's entries lying in $aE+b$. Thus, we can perform an affine substitution of the set of error's values to minimize a parameter that we called the affine diameter of $E$. This, combined with the truncation method, leads to solutions of small Euclidean norms that allow us to use lattice-based attacks again, while preserving the dimension $n$ of the problem. This results in a reduction of ResSD to a list-closest-vector-problem (which consists in listing the vector close up to a certain distance to the target) and to the list-shortest-vector-problem (the same, but the target is the zero vector) over lattices. Remarkably, certain parameters of ResSD instances heuristically lead to a degeneration of these list-decoding problems into basic CVP and SVP instances. This generalizes the observation of the CROSS designers \cite{CROSSsec} that ResSD with a two-element set $E$ corresponds to a subset-sum problem.

These attacks are based on statistics on the error distributions. In CROSS, the set $E$ of values for the error is a multiplicative subgroup of $\fq$, which allows us to randomize the solution of the ResSD instance. This allows us to convert weak-key attacks into standard probabilistic attacks.

\begin{table}[p!]
    \centering
    \setlength{\tabcolsep}{0.4em}
    \begin{tabular}{cp{1.4cm}cp{4.6cm}c}
    \toprule
    Field & Restr. $E$ & Problem & Parameters & Ref. \\
    \toprule
    $\F_q$ & Any & Regular SD & $n$ blocks of $z$ entries each, enforced binary solution, deterministic & \autoref{sec:regular}\\
    $\F_p$ & Any & CVP & full-rank lattice of dimension $zn$, volume $p^{2n-k}$, norm $=\sqrt{n}$, enforced binary solution, deterministic & \autoref{sec:cvp}\\
    $\F_p$ & Any & ListCVP/CVP & full-rank lattice of dimension $n$, volume $p^{n-k}$, norm $\le\sqrt{n}\cdot D_E/2$, deterministic & \autoref{sec:listcvp-aff-diam}\\
    $\F_p$ & Mult. subgroup & ListCVP/CVP & full-rank lattice of dimension $n$, volume $p^{n-k}$, norm $\le\sqrt{n}\cdot\sigma(E)$, probabilistic & \autoref{sec:listcvp-mean-median}\\
    $\F_p$ & Mult. subgroup & ListSVP/SVP & full-rank lattice of dimension $n$, volume $p^{n-k-1}$, norm $\le\sqrt{n}\cdot\sigma(E)$, probabilistic & \autoref{sec:listcvp-svp} \\
    \bottomrule
    \end{tabular}
    \vspace{0.25em}
    \caption{Summary of reductions from a ResSD problem with parity check matrix $\Hm \in \F_q^{(n-k)\times n}$ and syndrome $\sv \in \F_q^{n-k}$ with solution entries in a set $E$ of size $z$. 
    \ListCVP/\ListSVP{} instances may degrade to \CVP/\SVP{} for some parameters, under Gaussian Heuristic.
    $\sigma(E)$ denotes standard deviation of a random variable sampled uniformly from $E$. $D_E$ denotes the difference between the maximum and the minimum elements of $E$, as integers, up to affine substitution.
    The two last reductions are weak-key attacks for arbitrary restriction sets $E$ and can be converted into general probabilistic attacks when $E$ is a multiplicative subgroup of the field.
    }
    \label{tab:reductions}
\end{table}

\begin{table}[p!]
    \centering
    \setlength{\tabcolsep}{0.5em}
    \begin{tabular}{clccc}
    \toprule
    $(n,k)$ & Attack & Time & Memory & Ref.\\
    \toprule
    \multirow{5}{*}{(127,76)}
    & Shifted representations
    & $2^{143}$ & $2^{117}$ & \cite[Tab.5]{NISTPQC-ADD-R2:CROSS24} \\
    & Naive ISD enumeration
    & $2^{220}$ & poly($n$) & \autoref{st:basic-attack} \\
    & ISD for Regular SD
    & $2^{197}$ & $2^{162}$ & \autoref{subsec:RegAnalisys} \\
    & Hybrid-BatchCVP
    & $2^{229}$ & $2^{136}$ & \autoref{sec:security-batchcvp} \\
    & Hybrid-ListCVP
    & $2^{196}$ & $2^{26}$ & \autoref{sec:security-listcvp} \\
    \midrule
    \multirow{5}{*}{(187,111)}
    & Shifted representations
    & $2^{207}$ & $2^{169}$ & \cite[Tab.5]{NISTPQC-ADD-R2:CROSS24} \\
    & Naive ISD enumeration
    & $2^{319}$ & poly($n$) & \autoref{st:basic-attack} \\
    & ISD for Regular SD
    & $2^{285}$ & $2^{238}$ & \autoref{subsec:RegAnalisys} \\
    & Hybrid-BatchCVP
    & $2^{336}$ & $2^{201}$ & \autoref{sec:security-batchcvp} \\
    & Hybrid-ListCVP
    & $2^{287}$ & $2^{39}$ & \autoref{sec:security-listcvp} \\
    \midrule
    \multirow{5}{*}{(251,150)}
    & Shifted representations
    & $2^{274}$ & $2^{234}$ & \cite[Tab.5]{NISTPQC-ADD-R2:CROSS24} \\
    & Naive ISD enumeration
    & $2^{429}$ & poly($n$) & \autoref{st:basic-attack} \\
    & ISD for Regular SD 
    & $2^{380}$ & $2^{320}$ & \autoref{subsec:RegAnalisys} \\
    & Hybrid-BatchCVP
    & $2^{450}$ & $2^{271}$ & \autoref{sec:security-batchcvp} \\
    & Hybrid-ListCVP
    & $2^{384}$ & $2^{52}$ & \autoref{sec:security-listcvp} \\
    \bottomrule
    \end{tabular}
    \vspace{0.25em}
    \caption{Summary of existing and our attacks against CROSS-R-SDP instances.}
    \label{tab:attacks}
\end{table}

\medskip

Although CROSS insofar is the only scheme based on ResSD, our analysis would help with other potential future designs. We summarize our reductions in  \autoref{tab:reductions} and give all the complexities obtained for the CROSS parameters in \autoref{tab:attacks}. We see that the direct adaptation of ISD algorithms given in \cite{NISTPQC-ADD-R2:CROSS24} remains the best attack, and thus our reductions together with basic RegSD/Lattice-based solving techniques do not threaten CROSS so far. We nevertheless obtain an interesting Time/Memory compromise using our reduction to List-CVP.

\medskip

The relevant source code and supporting materials are available at \cite{ZenodoCodeRSD}.\footnote{
See also
\url{https://github.com/cryptolu/Cross-Paradigm-RestrictedSD}.
}


\section{Preliminaries}
\label{sec:prelim}

The finite field of size $q=p^k$, $p$ prime, is denoted by $\F_q$. The symbol $\eqdef$ is used to define the object on the left side. We denote by $\card{S}$ the cardinality of the set $S$. Vectors are denoted by bold lowercase letters and indexed from~1, matrices are denoted by bold capitals. 
For every matrix $\Mm$, we denote by $\Mm_i$ its $i^{\text{th}}$ column and by $\Mm_{i,j}$ its entry in the $i^{\text{th}}$ row and $j^{\text{th}}$ column. $\Imat_n$ and $\Omat$ are, respectively, the identity matrix of size $n$ for every integer $n$ and the zero matrix ($\Omat$ can also be the zero vector, depending on the context). 
The $i^{\text{th}}$ entry of a vector $\xv$ is denoted $x_i$. 
The \emph{Hamming weight} of a vector $\uv\in\F_q^n$ is $$\hnorm{\uv}\eqdef\pset{i\in\pset{1\dots n}\ \vert\ u_i\ne 0}.$$ 
The \emph{Euclidean norm} of a vector $\vv\in\Z^n$ is
\[
\enorm{\vv} \eqdef \sqrt{\sum_{i=1}^nv_i^2} \,.
\]

\subsection{Linear codes}

\begin{definition}[Linear code]
	A linear code $\CC\subset{\fq^n}$ of dimension $k$ is a vector subspace of $\fq^n$ of dimension $k$. It can be represented by two equivalent ways:
	\begin{itemize}
		\item With a generator matrix $\Gm\in\fq^{k\times n}$ whose rows are a basis of $\CC$: \[\CC = \left\{\xv\Gm\;\vert\;\xv\in\fq^k\right\}\]
		\item With a parity-check matrix $\Hm\in\fq^{(n-k)\times n}$ that is a generator matrix of $\CC^{\perp}$: \[\CC = \left\{\yv\in\fq^n\;\vert\;\yv\Hm^\mathsf{T} = \Omat\right\}\]
	\end{itemize}
	In particular, we have $\Gm\Hm^\mathsf{T} = \Omat$ and $\CC$ does not change if its generator or parity-check matrix is multiplied on the left by an invertible matrix.
\end{definition}

\begin{notation}
	$\CC(\Gm)$ is the linear code that admits $\Gm$ as a generator matrix, and $\CC^\perp(\Hm)$ is the code that admits $\Hm$ as a parity-check matrix.
\end{notation}
We say that a generator matrix $\Gm$ is in \textit{systematic form} if it is in the form $(\Imat_k,\Rm)$.


\subsubsection{Restricted syndrome decoding}

There are several variations of Restricted Syndrome Decoding problems (\ResSD{}) in the literature. Since we focus on the analysis of CROSS, we use the definition given in the specification, which we slightly generalize to include derived instances that we will analyze.

\begin{problem}[\ResSD{}]
Given $\Hm \in \F_q^{(n-k)\times n}$ and $\vec{s} \in \F_q^{n-k}$, find $\vec{e} \in E^n$ such that $\vec{e}\transp{\Hm} = \vec{s}$, where $E \subset \F_q$ is a predetermined subset of size $z$.
\end{problem}
In the CROSS specification \cite{NISTPQC-ADD-R2:CROSS24} and in \cite{CiC:BeuBriOyg24}, the set $E$ is chosen to be a multiplicative subgroup of $\F_q$, and $q=p$ is a prime. We provide the general definition as in \cite{PKC:BBPSWW24}, since we will also consider ``truncated'' instances where $E$ is a subset of the original multiplicative subgroup. We also focus on the full-weight instance, although \cite{ISIT:BBPSWW23,PKC:BBPSWW24} further considered a case where the number of coordinates $e_i = 0$ is fixed.

The following two statements show basic manipulation methods for \ResSD{} instances and are a formalization of the ``Shifting E'' paragraph from \cite{NISTPQC-ADD-R2:CROSS24}.

\begin{proposition}[Affine Shifting of $E$]\label{st:affine-change}
Let $(\Hm,\sv)\in\F_q^{(n-k)\times n}\times\F_q^{n-k}$ be an instance of \ResSD{} with solution's entries in a set $E=\{r_1,\ldots,r_z\}$. We denote by $\uv = (1,\dots,1)$ the vector of length $n-k$. For any $a,b \in \F_q$, $a \ne 0$, define a new instance $(\Hm',\sv')\in\F_q^{(n-k)\times n}\times\F_q^{n-k}$
of \ResSD{} with solution's entries in the set $E'$, where $\Hm'=a^{-1}\Hm$, $\sv'=\sv + ba^{-1}\uv\transp{\Hm}$, $E'=aE+b \subseteq \F_q$.
If there exists an algorithm solving $\left(\Hm',\sv'\right)$ over $E'$ in time $T$ with success probability $\rho$, then there exists an algorithm solving $(\Hm,\sv)$ over $E$ in time $T + \mathcal{O}(n^2)$ with success probability $\rho$.
\end{proposition}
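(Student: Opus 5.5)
The plan is to exhibit an explicit bijection between the solution sets of the two instances, given by the affine map $\ev \mapsto a\ev + b\mathbf{1}$ applied coordinate-wise. The reduction then computes $(\Hm',\sv')$, calls the assumed algorithm once, and pulls its answer back through the inverse map. One remark first: for $\uv\transp{\Hm}$ to be defined and to lie in $\F_q^{n-k}$, the all-ones vector $\uv$ must have length $n$, not $n-k$. I will read the statement that way and write $\mathbf{1}\in\F_q^n$ for it.

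\emph{Step 1 (correctness of the forward map).} Let $\ev\in E^n$ with $\ev\transp{\Hm}=\sv$, and set $\ev' \eqdef a\ev + b\mathbf{1}$. Each entry of $\ev'$ is $ae_i+b \in aE+b = E'$. By linearity,
\[
\ev'\transp{\Hm'} = (a\ev + b\mathbf{1})\,a^{-1}\transp{\Hm} = \ev\transp{\Hm} + ba^{-1}\mathbf{1}\transp{\Hm} = \sv + ba^{-1}\mathbf{1}\transp{\Hm} = \sv'.
\]

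\emph{Step 2 (the inverse map).} Conversely, let $\ev'\in E'^n$ with $\ev'\transp{\Hm'}=\sv'$, and put $\ev \eqdef a^{-1}(\ev' - b\mathbf{1})$; this is well defined because $a\neq 0$. Each entry $a^{-1}(e'_i-b)$ lies in $E$, since $x\mapsto ax+b$ is a bijection $\F_q\to\F_q$ sending $E$ onto $E'$. Reversing the computation of Step 1 gives $\ev\transp{\Hm} = \ev'\transp{\Hm'} - ba^{-1}\mathbf{1}\transp{\Hm} = \sv$. So the two maps are mutually inverse bijections between the solution sets, and in particular the new instance is solvable whenever the original one is.

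\emph{Step 3 (the algorithm and its cost).}
\begin{enumerate}
\item Compute $a^{-1}$, then $\Hm' = a^{-1}\Hm$, which takes $(n-k)n$ field multiplications.
\item Compute $\mathbf{1}\transp{\Hm}$, the vector of row sums of $\Hm$, in $\mathcal{O}(n(n-k))$ operations, and from it $\sv'$ in $\mathcal{O}(n-k)$ more.
\item Run the assumed algorithm on $(\Hm',\sv')$ over $E'$.
\item If it outputs $\ev'$, return $a^{-1}(\ev'-b\mathbf{1})$, which costs $\mathcal{O}(n)$.
\end{enumerate}
The overhead is $\mathcal{O}(n(n-k))\subseteq\mathcal{O}(n^2)$ field operations, so the total time is $T+\mathcal{O}(n^2)$. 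By Step 2, every success of the inner algorithm yields a valid solution of $(\Hm,\sv)$ over $E$. The success probability is therefore exactly $\rho$, since the outer algorithm uses no additional randomness.

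Nothing here is a real obstacle. The only points needing care are the invertibility of $a$, which is what makes the map a bijection, and the dimension of $\uv$ noted above. If one also wants to treat $\rho$ as a probability over random instances, the bijection $\ev\mapsto a\ev+b\mathbf{1}$ also shows that the transformation maps the relevant instance distributions compatibly. I would only add that remark if the paper's probability model requires it.
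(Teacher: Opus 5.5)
Your proof is correct and follows the paper's own route. You compute $(\Hm',\sv')$ in $\mathcal{O}(n^2)$, run the assumed solver, and pull its output back via $\ev=a^{-1}(\ev'-b\uv)$, checking that the entries land in $E$ and the syndrome equation holds. Your added forward direction is also correct, and so is your observation that $\uv$ must have length $n$, not $n-k$, for $\uv\transp{\Hm}$ to be defined (a typo in the statement).
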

\begin{proof}
Computing $(\Hm',\sv')$ can be done in time $\mathcal{O}(n^2)$. Let us then suppose that we know $\ev'\in E'^n$ such that
\begin{align*}
    \ev'\transp{\Hm'}=\sv' & \Leftrightarrow \quad\ \ev'(a^{-1}\transp{\Hm}) = \sv + ba^{-1}\uv\transp{\Hm}\\
                           & \Leftrightarrow\quad a^{-1}(\ev'-b\uv)\transp{\Hm} = \sv
\end{align*}
We also remark $\ev\eqdef a^{-1}(\ev'-b\uv)$ has entries in \[a^{-1}E'-a^{-1}b = a^{-1}(aE+b)-a^{-1}b = E\]
So $\ev$ that can be computed in linear time is a solution to our problem.\qed
\end{proof}

\begin{proposition}[Multiplicative Randomization]\label{st:mult-randomization}
Let $(\Hm,\sv)\in\F_q^{(n-k)\times n}\times\F_q^{n-k}$ be an instance of \ResSD{} with
restriction set
$E = \pset{\omega^i \mid 0 \le i < z}$ for a primitive $z^{\text{th}}$ root of unity $\omega\in \F_q$. Then, one can sample in polynomial time a new instance $(\Hm',\sv)\in\F_q^{(n-k)\times n}\times\F_q^{n-k}$ of \ResSD{} over the same restriction set $E$, such that the sets of solutions to the two instances are in bijection computable in $\mathcal{O}(n)$ and the image of any solution to $(\Hm,\sv)$ is uniformly distributed over $E^n$.
\end{proposition}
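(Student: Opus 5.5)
The plan is to randomize each column of $\Hm$ by an independent uniformly random element of $E$, which is a group since $E$ is the cyclic subgroup generated by $\omega$. Concretely, sample exponents $t_1,\dots,t_n$ independently and uniformly from $\{0,\dots,z-1\}$. Let $\Dm \eqdef \mathrm{diag}(\omega^{t_1},\dots,\omega^{t_n})$ (written out explicitly, since the paper has no such macro). Set $\Hm' \eqdef \Hm\,\Dm^{-1}$, i.e.\ the $i^{\text{th}}$ column becomes $\omega^{-t_i}\Hm_i$. Sampling needs $n$ random indices and $\mathcal{O}(n(n-k))$ field operations, so it is polynomial time.

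Next I would exhibit the bijection $\ev \mapsto \ev' \eqdef \ev\Dm$, coordinatewise $e'_i = \omega^{t_i}e_i$, with inverse $\ev'\mapsto \ev'\Dm^{-1}$. Each direction costs $\mathcal{O}(n)$ multiplications. Since $E$ is closed under multiplication by powers of $\omega$, the map sends $E^n$ onto $E^n$, and likewise for its inverse. The syndrome is preserved:
\[
\ev'\transp{\Hm'} = \ev\Dm\,(\Dm^{-1})^{\intercal}\transp{\Hm} = \ev\Dm\Dm^{-1}\transp{\Hm} = \ev\transp{\Hm} = \sv,
\]
where we use that $\Dm$ is diagonal. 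So solutions of $(\Hm,\sv)$ correspond exactly to solutions of $(\Hm',\sv)$ over the same set $E$.

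For uniformity, fix any solution $\ev$ and write $e_i=\omega^{a_i}$. Its image has $i^{\text{th}}$ coordinate $\omega^{a_i+t_i \bmod z}$. Because the $t_i$ are independent and uniform on $\Z/z\Z$, each shifted exponent $a_i+t_i$ is uniform, and the shifts are mutually independent. Hence the image is uniform over $E^n$.

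The only point needing care is whether "uniformly distributed" is meant for each fixed solution, over the randomness of the sampling. That is the reading the argument above establishes, and it is what makes weak-key attacks into probabilistic ones. A subtlety I would note is that $\Hm'$ itself depends on the $t_i$, but the statement asks only about the image distribution, so this does not matter. No real obstacle is expected beyond checking that $E$ is a group, which follows from $\omega$ being a primitive $z^{\text{th}}$ root of unity.
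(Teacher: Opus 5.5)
Your proposal is correct and follows essentially the same route as the paper. The paper also samples a uniformly random diagonal matrix $\Cm$ with entries in $E$, sets $\Hm' = \Hm\Cm^{-1}$, maps $\ev \mapsto \ev\Cm$, and gets uniformity from the fact that multiplication by any element of $E$ permutes $E$. Your exponent-shift argument simply spells out that uniformity and the independence across coordinates, which the paper leaves implicit.
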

\begin{proof}
Sample a vector $\cv$ uniformly at random from $E^n$ and let $\Cm$ be the diagonal matrix constructed from $\cv$. Observe that a solution $\ev$ to the initial problem satisfies $$\sv=\ev\Hm^\mathsf{T} = \ev \Cm \Cm^{-1} \Hm^\mathsf{T} = \ev' \Hm^{\prime \mathsf{T}},$$
where $\Hm' = \Hm \Cm^{-1}$, and $\ev'=\xv \Cm$ has every coordinate multiplied by a random element from $E$. Observing that because $\omega$ is a primitive root, for any $e_0\in E$ the mapping
$e \mapsto e_0e$ is a bijection from $E$ to itself, this concludes the proof.\qed
\end{proof}
The following method is inspired by hybrid methods in regular syndrome decoding (see below and in \autoref{sec:regular}). It further exploits the multiplicative subgroup structure to provide stronger attack guaranties.

\begin{proposition}[Multiplicative Truncation]
\label{st:mult-truncation}
Let $(\Hm,\sv)\in\F_q^{(n-k)\times n}\times\F_q^{n-k}$ be an instance of \ResSD{} with an existing solution $\xv$ whose entries lie in $E = \pset{\omega^i \mid 0 \le i < z}$ for a primitive $z^{\text{th}}$ root of unity $\omega\in \F_q$. Then, for any integer $z'$ with $1 \le z' \le z$, one can efficiently sample a \ResSD{} instance $(\Hm',\sv)\in\F_q^{(n-k)\times n}\times\F_q^{n-k}$ such that, with probability at least $\rho$ (over sampling), the entries of the solution will lie in $E' = \pset{\omega^i \mid 0 \le i < z'}$, where $
\rho = \pround{z'/z}^n
$,
and a solution to $(\Hm',\sv)$ in $E'^n$ can be efficiently converted into a solution to $(\Hm,\sv)$ in $E$.
\end{proposition}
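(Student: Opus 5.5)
The plan is to reuse the randomization of Proposition~\ref{st:mult-randomization} unchanged and then simply check how often the randomized solution lands in $E'$. Sample $\cv$ uniformly from $E^n$, let $\Cm$ be the diagonal matrix built from $\cv$, and output $(\Hm',\sv)$ with $\Hm'=\Hm\Cm^{-1}$. Sampling $\cv$ and forming $\Hm'$ clearly take polynomial time.

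For the algebra, the fixed solution $\xv$ satisfies
\[
\sv=\xv\transp{\Hm}=\xv\Cm\Cm^{-1}\transp{\Hm}=\xv'\transp{(\Hm')},
\qquad \xv'\eqdef\xv\Cm,
\]
since $\Cm$ is diagonal and hence $\transp{(\Cm^{-1})}=\Cm^{-1}$. So $\xv'$, with entries $x'_i=x_i c_i$, is a solution of the new instance.

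For the distribution, $E$ is a cyclic group of order $z$ and each $x_i\in E$ is fixed. Since $c_i$ is uniform on $E$, the product $x_ic_i$ is uniform on $E$, because multiplication by $x_i$ is a bijection of $E$. The $c_i$ are independent, so the coordinates $x'_i$ are independent and uniform on $E$. Hence $\Pr[x'_i\in E']=z'/z$ for each $i$, and $\Pr[\xv'\in E'^n]=(z'/z)^n=\rho$. The statement says ``at least'' because the new instance may also have other solutions in $E'^n$ besides $\xv'$.

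For the conversion, suppose we are given any $\ev'\in E'^n$ with $\ev'\transp{(\Hm')}=\sv$. Set $\ev\eqdef\ev'\Cm^{-1}$, which costs $\mathcal{O}(n)$. Then $\ev\transp{\Hm}=\ev'\Cm^{-1}\transp{\Hm}=\sv$. Moreover $e_i=e'_ic_i^{-1}$ lies in $E$, because $E'\subseteq E$ and $E$ is closed under products and inverses.

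I expect no real obstacle. The only points needing care are these:
\begin{itemize}
\item the transpose and inverse bookkeeping for the diagonal matrix $\Cm$;
\item the fact that the conversion must work for an arbitrary solution in $E'^n$, which uses only the group structure of $E$ and not the planted $\xv$.
\end{itemize}
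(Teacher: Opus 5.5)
Your proposal is correct and follows the paper's proof essentially step for step. You sample $\Cm$ via multiplicative randomization, observe that each coordinate of $\xv\Cm$ is independently uniform on $E$ and so lands in $E'$ with probability $z'/z$, and convert back by multiplying by $\Cm^{-1}$; your explicit transpose bookkeeping and your check that the conversion works for an arbitrary solution in $E'^n$ (using only closure of $E$ under products and inverses) simply spell out what the paper leaves as ``clearly.''
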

\begin{proof}
Let $(\Hm',\sv)$ be sampled using multiplicative randomization (\autoref{st:mult-randomization}) and let $\xv'$ be the image of $\xv$ under the bijection from the statement.
Clearly, every coordinate of $\xv'$ lands in $E'$ with probability $z'/z$, independently of other coordinates. Therefore, $\xv'$ belongs to $E^{\prime n}$ with probability $\rho$. Recovering $\xv$ from any solution $\xv'$ amounts to $n$ multiplications (by the diagonal of $\Cm^{-1}$); the result clearly belongs to $E^n$ and satisfies the parity check equation.\qed
\end{proof}
\begin{remark}
This proposition is heavily based on the fact that $E$ is a multiplicative subgroup. For general $E$, it is possible to apply a variant of the truncation similar to hybrid attacks on the regular syndrome decoding problem described in \cite{EC:BriOyg23,WWYLYZW25}. Concretely, we do not modify the parity check matrix and simply guess $z'$ out of $z$ possible values of each error coordinate separately. This has the same probability $1-\binom{z-1}{z'}/\binom{z}{z'}=z'/z$ per coordinate. However, the key difference from the multiplicative truncation is that one cannot force a particular subset of $E$, and every coordinate will have its own randomly chosen subset.
\end{remark}
Finally, we also describe a basic memoryless attack that will be used for comparison with other low-memory methods. Note that the CROSS security analysis document \cite{CROSSsec} only describes attacks with the best time complexity, which have impractical memory complexities.

\begin{proposition}
\label{st:basic-attack}
Let $(\Hm,\sv)\in\F_q^{(n-k)\times n}\times\F_q^{n-k}$ be an instance of \ResSD{} with an existing solution whose entries lie in a set $E$ with $|E|=z$. Then, it can be solved in time $\mathcal{O}(n^2z^k)$ and negligible memory.
\end{proposition}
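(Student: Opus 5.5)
The plan is a plain information-set enumeration. If $\Hm$ has full rank $n-k$, we choose a set $J$ of $k$ columns such that the complementary $n-k$ columns of $\Hm$ form an invertible submatrix $\Hm_{\bar J}$. Gaussian elimination finds such a $J$, and computes $\Hm_{\bar J}^{-1}$ and the product $\Hm_{\bar J}^{-1}\Hm_J$, in polynomial time. For the target quadratic cost we may assume these quantities are precomputed once, or absorbed into the cost. If $\Hm$ is not of full rank, we first discard dependent rows, and we either check consistency of $\sv$ or note that the instance has a solution by hypothesis. Once the rank is $r<n-k$, the free part has size $n-r>k$. To keep the count at $z^k$, we pick $J$ of size $k$ and treat the extra coordinates in a different way. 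Alternatively, we assume full rank, as in the cryptographic setting. I would state this assumption explicitly.

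Next we rewrite the parity-check equation as $\ev_J\transp{\Hm_J} + \ev_{\bar J}\transp{\Hm_{\bar J}} = \sv$. This gives
\[
\ev_{\bar J} = \pround{\sv - \ev_J\transp{\Hm_J}}\pround{\transp{\Hm_{\bar J}}}^{-1}.
\]
So once $\ev_J$ is fixed, the whole vector $\ev$ is determined uniquely.

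The algorithm loops over all $z^k$ vectors $\ev_J\in E^k$. It enumerates them in a fixed order using a counter, so only $\mathcal{O}(n)$ memory is needed besides the precomputed matrices. For each candidate it computes $\ev_{\bar J}$ by one vector–matrix product of cost $\mathcal{O}(k(n-k)) = \mathcal{O}(n^2)$, and then checks whether all entries of $\ev_{\bar J}$ lie in $E$, which costs $\mathcal{O}(nz)$ or $\mathcal{O}(n)$ with a lookup table. The first candidate that passes the check is returned. It satisfies $\ev\transp{\Hm}=\sv$ by construction and has all entries in $E$.

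Correctness follows because the existing solution $\xv$ has $\xv_J\in E^k$. It is therefore reached during the enumeration, and the formula for $\ev_{\bar J}$ reproduces $\xv_{\bar J}$, which lies in $E^{n-k}$. The total time is $\mathcal{O}(z^k n^2)$ plus the one-time $\mathcal{O}(n^3)$ elimination, and that term is dominated whenever $z^k\ge n$. Memory is just the matrices and a counter, polynomial in $n$, which is what ``negligible'' means here.

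The main obstacle is bookkeeping rather than mathematics. The polynomial precomputation has to be absorbed into the stated bound. One can also shave the per-candidate cost by using incremental (Gray-code) updates, where changing one coordinate of $\ev_J$ updates $\ev_{\bar J}$ in $\mathcal{O}(n)$. This only improves the bound, so the statement holds as written.
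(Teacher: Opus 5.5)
Your proposal is correct and is essentially the paper's argument: the paper writes $\Hm = (\Imat_{n-k}, \Hm')$, which implicitly assumes full rank and systematic form, exactly the assumption you make explicit. It then enumerates the $z^k$ values of the $k$ free coordinates $\yv$ and checks that $\sv^T - \Hm'\yv^T$ lies in $E^{n-k}$, at $\mathcal{O}(n^2)$ cost per guess. Your extra bookkeeping about the one-time elimination cost and Gray-code updates is a harmless refinement, but you should drop the vague remark about treating the rank-deficient case ``in a different way'' and simply assume full rank.
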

\begin{proof}
Write $\Hm = (\Imat_{n-k}, \Hm')$ so that for $\ev=(\xv,\yv) \in \F_q^{n-k}\times \F_q^k$, $\Hm \ev^T= \xv^T + \Hm'\yv^T = \sv^T$. Enumerate $z^k$ candidates for $\yv$ and verify that $\xv^T=\sv^T-\Hm'\yv^T$ belongs to $E^{n-k}$ (which can be done in $\mathcal{O}(n^2)$ field operations per guess).
\qed
\end{proof}


\subsubsection{Regular syndrome decoding}

We recall the Regular Syndrome Decoding problem (\RegSD{}).

\begin{definition}
    For $n = z t$, a vector $\ev\in\F_q^n$ of Hamming weight $t$ is \emph{regular} if it can be written $\ev = (\ev_1,\dots,\ev_t)$ where each $\ev_i \in \F_q^z$ has Hamming weight~1.
\end{definition}

\begin{problem}[\RegSD{}]
Given $\Hm \in \F_q^{(n-k)\times n}$ and
$\vec{s} \in \F_q^{n-k}$, $n=zt$, find a regular vector $\vec{e} \in \F_q^n$ of Hamming weight $t$ such that $\vec{e}\transp{\Hm} = \vec{s}$.
\end{problem}

\begin{definition}
    In this work, we will often consider regular vectors over $\F_q^n$ having only one value on their nonzero entries. We are going to call such vectors \emph{light-regular vectors} and denote by $L_t^n$ the set of those of Hamming weight $t$.
\end{definition}


\subsection{Lattices}

An $n$-dimensional ball of radius $R$, denoted $\Ball{n,R} \subseteq \R^n$, consists of all vectors of $\R^n$ having the Euclidean norm at most $R$. Its volume is \[
\Vol{\Ball{n,R}} \eqdef \frac{R^n\pi^{n/2}}{\Gamma(1+n/2)} = R^n \cdot \Vol{\Ball{n,1}}.
\]

\begin{definition}[Lattice]
Let $\Am\in\R^{k\times n}$. The $k$-dimensional \emph{lattice} generated by $\Am$ is $\LL(\Am) \eqdef \left\{\xv\Am\ \vert\ \xv\in\Z^k\right\}$. The length of its shortest nonzero vector (in terms of the Euclidean norm) is denoted $\lambda_1(\LL(\Am))$.
The \emph{volume} of $\Am$ is defined as $\Vol{\LL(\Am)}\eqdef\sqrt{|\det{\Am\Am^T}|}$.
\end{definition}
We now describe fundamental lattice-related problems.
The following standard heuristic is used to approximate the number of lattice vectors of norm below a given radius $r$.

\begin{heuristic}[Gaussian Heuristic (GH)]
For a lattice $\LL \subset \R^n$, the number of lattice points inside a ball $\Ball{n,R}$ is approximately \[
|\LL \cap \Ball{n,R}| \approx \frac{\Vol{\Ball{n,R}}}{\Vol{\LL}}.
\]
\end{heuristic}
In particular,
under GH, the shortest vector is approximated as \[
\lambda_1(\LL)
~\approx~
\pround{\Vol{\Ball{n,1}}}^{-1/n} \pround{\Vol{\LL}}^{1/n} ~\approx~
\sqrt{\frac{n}{2\pi e}} \pround{\Vol{\LL}}^{1/n}.
\]
Below, we describe standard lattice problems and their list-decoding variants and summarize the best solving complexities based on standard heuristics. All mentioned algorithms are probabilistic.

\begin{problem}[Shortest Vector Problem, \SVP]
Given a lattice $\LL\subset\R^n$, find a nonzero vector $\uv$ of $\LL$ of minimum Euclidean norm $\lambda_1(\LL)$.
\end{problem}

\begin{problem}[\ListSVP]
Given a lattice $\LL\subset\R^n$ and a norm upper bound $r$, list all vectors of $\LL$ of norm at most $R$. Equivalently, compute $\LL \cap \Ball{n,R}$.
\end{problem}

\begin{problem}[Closest Vector Problem, \CVP]
Given a lattice $\LL\subset\R^n$ and a vector $\yv\in\R^n$, find a vector $\uv\in\LL$ that minimizes $\enorm{\yv-\uv}$.
\end{problem}

\begin{problem}[\ListCVP]
Given a lattice $\LL\subset\R^n$, a vector $\yv\in\R^n$, and a norm upper bound $R$, list all vectors of $\LL$ having distance at most $R$ to $\yv$. Equivalently, compute $\LL \cap (\yv+\Ball{n,R})$.
\end{problem}

\begin{heuristic}[Sieving \cite{SODA:BDGL16}]
\label{st:sieving}
The SVP problem over an $n$-dimensional lattice can be solved in time $2^{0.292n+o(n)}$ and memory $2^{0.208n+o(n)}$.
\end{heuristic}

Single \CVP{} complexity is the same as sieving; however, there is a speedup when multiple targets need to be solved.

\begin{heuristic}[{Batch-CVP Complexity \cite{PKC:DucLaavWo20}}]\label{heur:cvp-cost}
A batch of $M$ CVP instances over an $n$-dimensional lattice, $M\ge2^{0.058n+o(n)}$, can be solved in time $M\cdot 2^{0.234n+o(n)}$ and memory $2^{0.208n+o(n)}$.
\end{heuristic}

The following heuristic claim is based on classic enumeration methods \cite{AMS:FinPoh85}, \cite{MP:SchEuc94,IWCC:HanPujSte11}, which we summarize in \autoref{app:enum}.

\begin{heuristic}[\ListSVP/\ListCVP{} complexity]
\label{st:enum}
The \ListSVP/\ListCVP{} problems over an $n$-dimensional lattice $\LL$, up to a negligible amount of missed vectors, can be solved in time 
\[
2^{0.292n+o(n)} + \mathcal{O}\pround{
n\sum_{i=1}^n \frac{
\Vol{\Ball{i,R_i}}
}{
\delta^{i(n-i)} \pround{\Vol{\LL}}^{i/n}
}
}
\]
and $2^{0.208n+o(n)}$ memory,
where $\delta=\pround{\frac{n}{2\pi e}}^{-\frac{1}{2n-2}}$ and $R_i = R\cdot \sqrt{\frac{i}{n}}$.
\end{heuristic}

We will also use the standard notions of LLL-reduced, BKZ-$\beta$ reduced ($2 \le \beta \le n$) and HKZ-reduced lattice bases. Their formal definition is given in \autoref{app:enum}.



\section{Regular Syndrome Decoding Modeling}
\label{sec:regular}

In this section, $E$ is a fixed subset of $\fq$ of cardinality $z$, whose elements are $\left\{r_1,\dots,r_z\right\}$.

\begin{notation}\label{not:expand}
    For every matrix $\Mm$ over $\fq$ with $n$ columns, we denote  the $zn$-column matrix \[ \Mm^E \eqdef \begin{pmatrix}
	   r_1\Mm_1 & \dots & r_z\Mm_1 \ \Vert \ \dots & \dots\ \Vert \ r_1\Mm_n & \dots & r_z\Mm_n
    \end{pmatrix} \]
\end{notation}

\subsection{Reduction}\label{subsec: reduc}

We consider the following matrix and vector
\[\Um_n = \begin{pmatrix}
	   1\dots 1 & \Vert & \ 0 & \dots & \dots & 0 \\
	   0 & \ddots & & \ddots & & \vdots \\
	   \vdots & \ddots\ \ & & \ddots &\ \ \ddots & \vdots \\
	   \vdots & & \ddots & &\ \ \ \ \ddots & 0 \\
	   0 & \dots & \dots & 0 & \Vert & 1\dots 1
    \end{pmatrix}\in\fq^{n\times nz} \ \ \text{and}\ \  \uv_n = \begin{pmatrix}
1 \dots\dots\dots 1
\end{pmatrix}\in\fq^n\]
for the following definition.

\begin{definition}\label{def:regular}
    For any $\Hm\in\fq^{(n-k)\times n}$ and $\sv\in\fq^{n-k}$, we define 
    \[\widetilde{\Hm}\eqdef\begin{pmatrix}
	   \Um \\ \Hm^E
    \end{pmatrix}\in\F_{p}^{(2n-k)\times zn}\quad and \quad\widetilde{\sv}\eqdef\begin{pmatrix}
        \uv & \vert & \sv
    \end{pmatrix}\in\F_{p}^{2n-k}\]
    For any $\ev\in E^n$, we also define $\widetilde{\ev} = (\widetilde{\ev}_1,\dots,\widetilde{\ev}_n)\in\F_q^{zn}$ such that for $1\leq i\leq n$, if the $i^{\text{th}}$ entry of $\ev$ is equal to $r_j$, then $\widetilde{\ev}_i\in \F_q^z$ has value one at its $j^{\text{th}}$ entry and zeros elsewhere.
\end{definition}
This construction allows the following lemma:

\begin{lemma}\label{lem:ResRegBij}
    Let $(\Hm,\sv)\in\F_q^{(n-k)\times n}\times\F_q^{n-k}$ and $\left(\widetilde{\Hm},\widetilde{\sv}\right)\in\F_q^{(2n-k)\times zn}\times\F_q^{2n-k}$ defined as in \autoref{def:regular}. Every solution $\xv$ of $\xv\transp{\widetilde{\Hm}} = \widetilde{\sv}$ such that $\hnorm{\xv}\leq n$ belongs to $L^{zn}_n$ and the mapping
    \begin{align*}
	   \varphi:E^n &\longrightarrow L_n^{zn}\\
	   \ev\ \ &\longrightarrow\ \widetilde{\ev}
    \end{align*} forms a bijection between the solutions in $E^n$ of the \ResSD{} instance $(\Hm,\sv)$ and the $\RegSD{}$ instance $\left(\widetilde{\Hm},\widetilde{\sv}\right)$ with solutions of Hamming weight $n$.
\end{lemma}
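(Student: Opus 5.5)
The plan is to split the lemma into two parts: first a structural claim about low-weight solutions of the extended system, then the bijection built on top of it. Throughout, write $\xv=(\xv_1,\dots,\xv_n)$ with blocks $\xv_i\in\F_q^z$. The equation $\xv\transp{\widetilde{\Hm}}=\widetilde{\sv}$ splits along the two row-blocks of $\widetilde{\Hm}$:
\begin{itemize}
\item the $\Um$ part gives $\sum_{j=1}^z x_{i,j}=1$ for every $i$;
\item the $\Hm^E$ part gives $\sum_{i=1}^n\big(\sum_{j=1}^z r_j x_{i,j}\big)\Hm_i^{\mathsf T}=\sv$.
\end{itemize}
Note that $\Hm^E$ has $n-k$ rows, so $\widetilde{\Hm}$ has $2n-k$ rows, consistent with the definition.

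\emph{Structural claim.} Since every block sums to $1$, no block of $\xv$ is zero, so each block has Hamming weight at least $1$. Together with $\hnorm{\xv}\le n$ this forces each block to have weight exactly $1$. A single nonzero entry summing to $1$ must equal $1$. Hence $\xv$ is regular of weight $n$ with all nonzero entries equal to $1$, i.e.\ $\xv\in L^{zn}_n$. This is the only step with real content, and it is the pigeonhole argument just given.

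\emph{Well-definedness and injectivity.} The map $\varphi$ is well defined if the elements $r_1,\dots,r_z$ of $E$ are distinct, which holds because $\card{E}=z$. It is then injective: the position of the $1$ in block $i$ recovers $e_i=r_j$.

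\emph{Solutions go to solutions.} For $\ev\in E^n$, the block sums of $\widetilde{\ev}$ are all $1$, so the $\Um$ part holds. In block $i$ we have $\sum_j r_j\widetilde{e}_{i,j}=e_i$, so the $\Hm^E$ part reduces to $\ev\transp{\Hm}$. Therefore $\widetilde{\ev}\transp{\widetilde{\Hm}}=(\uv\mid\ev\transp{\Hm})$. It follows that $\ev$ solves the \ResSD{} instance if and only if $\widetilde{\ev}$ solves the extended system.

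\emph{Surjectivity onto weight-$n$ solutions.} Let $\xv$ be any solution with $\hnorm{\xv}=n$. By the structural claim, $\xv\in L^{zn}_n$, with the $1$ of block $i$ at some position $j_i$. Define $e_i=r_{j_i}$. Then $\xv=\widetilde{\ev}$, and by the previous equivalence $\ev$ is a solution in $E^n$.

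We therefore obtain the bijection between the two solution sets. The main obstacle is minor: one must note that weight exactly $n$ is forced, so "weight $\le n$" and "weight $n$" describe the same solution set, and that the distinctness of the $r_j$ is needed for $\varphi$ to be injective.
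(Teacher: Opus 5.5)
Your proposal is correct and follows the paper's proof. The block-sum rows of $\widetilde{\Hm}$ force every block to be nonzero, so $\hnorm{\xv}\le n$ leaves exactly one entry, equal to $1$, per block, and the bijection comes from $\widetilde{\ev}\transp{\widetilde{\Hm}}=(\uv\mid\ev\transp{\Hm})$; you are a little more explicit than the paper in writing out this equivalence and in noting that the distinctness of the $r_j$ is what makes $\varphi$ injective.
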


\begin{proof}
    By construction of the $n$ first rows of $\widetilde{\Hm}$ and the $n$ first entries of $\widetilde{\sv}$, $\xv\transp{\widetilde{\Hm}} = \widetilde{\sv}$ implies that $\xv$ has at least one nonzero entry per block, so $\hnorm{\xv}\geq n$. Adding the constraint $\hnorm{\xv}\leq n$ means that $\hnorm{\xv} = n$ and so $\xv$ is regular. Still, by construction of the $n$ first rows of $\widetilde{\Hm}$ and the $n$ first entries of $\widetilde{\sv}$, we deduce that the unique nonzero value in each block can only be $1$. So $\xv\in L^{zn}_n$.\\
    For $\ev$ a solution of $(\Hm,\sv)$, it is easy to see that $\widetilde{\ev}$ is a solution of $\left(\widetilde{\Hm},\widetilde{\sv}\right)$ and, conversely, for $\uv$ a solution of $\left(\widetilde{\Hm},\widetilde{\sv}\right)$ we can easily see that $\varphi^{-1}(\uv)$ is a solution of $(\Hm,\sv)$.
    Because $\varphi$ and $\varphi^{-1}$ are injective functions, this concludes the proof.\qed
\end{proof}
This bijection naturally gives the following reduction.

\begin{theorem}\label{th:red1}
    Let $(\Hm,\sv)\in\F_q^{(n-k)\times n}\times\F_q^{n-k}$ be an instance of \ResSD{} with solution entries in $E = \pset{r_1,\dots r_z}$, and consider $\left(\widetilde{\Hm},\widetilde{\sv}\right)\in\F_q^{(2n-k)\times zn}\times\F_q^{2n-k}$ defined as in \autoref{def:regular} as an instance of \RegSD{}  with solutions of Hamming weight $n$. If there exists an algorithm solving $\left(\widetilde{\Hm},\widetilde{\sv}\right)$ in time $T$ with success probability $\rho$, then there exists an algorithm solving $(\Hm,\sv)$ in time $T + \mathcal{O}(n^2)$ with success probability $\rho$.
\end{theorem}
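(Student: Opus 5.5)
The reduction is a direct consequence of \autoref{lem:ResRegBij}. The algorithm for $(\Hm,\sv)$ works in three steps.

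First, it builds the \RegSD{} instance $\bigl(\widetilde{\Hm},\widetilde{\sv}\bigr)$ of \autoref{def:regular}. The block $\Um$ and the vector $(\uv \,\vert\, \sv)$ are written down directly. Each column $r_j\Hm_i$ of $\Hm^E$ is a scalar multiple of a column of $\Hm$. For constant $z$ (as in CROSS), this costs $\mathcal{O}(zn(n-k)) = \mathcal{O}(n^2)$ field operations.

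Second, it calls the assumed \RegSD{} algorithm on $\bigl(\widetilde{\Hm},\widetilde{\sv}\bigr)$ with target Hamming weight $n$. With probability $\rho$ this returns a regular $\xv$ with $\hnorm{\xv}= n$ and $\xv\transp{\widetilde{\Hm}}=\widetilde{\sv}$. By the first part of \autoref{lem:ResRegBij}, such an $\xv$ lies in $L^{zn}_n$: each block contains exactly one entry, and that entry equals~1.

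Third, it applies $\varphi^{-1}$. For each block $i$, locate the index $j$ of the single~1 and set $e_i = r_j$. This is a linear scan in $\mathcal{O}(zn)$. By the bijection in \autoref{lem:ResRegBij}, the resulting $\ev\in E^n$ satisfies $\ev\transp{\Hm}=\sv$.

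Correctness also needs existence. Any solution of the \ResSD{} instance maps under $\varphi$ to a weight-$n$ solution of the \RegSD{} instance, so the call is well posed. Moreover, every output of the \RegSD{} solver is pulled back to a valid \ResSD{} solution. Hence the success probability is exactly $\rho$.

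The only delicate point is the overhead accounting. The construction of $\Hm^E$ costs $\mathcal{O}(zn^2)$, which is absorbed into $\mathcal{O}(n^2)$ by treating $z$ as a constant. Alternatively, one can note that the time $T$ of any algorithm reading $\widetilde{\Hm}$ already dominates this cost. Everything else is routine.
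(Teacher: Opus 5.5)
Your proposal is correct and follows the paper's own proof, which likewise combines \autoref{lem:ResRegBij} with two facts: $\bigl(\widetilde{\Hm},\widetilde{\sv}\bigr)$ can be built in quadratic time, and $\ev\in E^n$ is recovered from $\widetilde{\ev}\in L^{zn}_n$ in linear time. Your remark that the $\mathcal{O}(zn^2)$ construction cost becomes $\mathcal{O}(n^2)$ only when $z$ is treated as a constant usefully makes explicit an assumption the paper leaves implicit.
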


\begin{proof}
    It follows immediately from \autoref{lem:ResRegBij} and the facts that computing $\left(\widetilde{\Hm},\widetilde{\sv}\right)$ can be done in quadratic time and recovering $\ev\in E^n$ from $\widetilde{\ev}\in L^{zn}_n$ in linear time.
    \qed
\end{proof}

\begin{remark}
    The light-regular solutions of the instance $\left(\widetilde{\Hm},\widetilde{\sv}\right)$ of \RegSD{} are exactly the vectors of minimum Hamming weight among those satisfying $\xv\transp{\widetilde{\Hm}} = \widetilde{\sv}$ (as seen in \autoref{lem:ResRegBij}), so they are also solutions of the usual Syndrome Decoding problem (but with a structured code). They are also solutions of minimum Lee metric as well as the ones of minimum Euclidean norm, which leads to considering them as solutions of a lattice problem as we do in the next section.
\end{remark}


\subsection{Security analysis}\label{subsec:RegAnalisys}

 Most algorithms for solving the usual decoding problem rely on ISD (Information Set Decoding) \;
 \cite{P62,S88,D91,B97,FS09,BLP11,BJMM12,MO15,BM17,BM18,EC:LWYY24}. These have been adapted for the case of regular vectors in \cite{AC:HOSS18,EC:CarCouJou23,C:EssSan24} for the binary case, and in \cite{EC:LWYY24,KL24} for the general $q$-ary case. We are going to adapt the Permutation-Based and Enumeration-Based Regular-ISD from \cite{C:EssSan24}, which are themselves adaptations of \cite{P62} and \cite{FS09} to the error-regular case.

\subsubsection{Bi-Regular Permutations}

 We should first consider the conditions for our matrix $\widetilde{\Hm}$ to be set in systematic form. Let $\Pm\in\fq^{zn\times zn}$ be a permutation matrix and we denote $\left(\widetilde{\Hm}_1,\widetilde{\Hm}_2\right)\eqdef\widetilde{\Hm}\Pm$, where $\widetilde{\Hm}_1$ is a square matrix of size $2n-k$. $\Pm$~must be selected in order to have $\widetilde{\Hm}_1$ invertible. So, because of the $n$ first rows of $\widetilde{\Hm}$, there must be at least one column of each block in $\widetilde{\Hm}_1$, or there will be empty rows. Moreover, there should not be three columns or more from the same block in $\widetilde{\Hm}_1$, or there will be a linear dependency between columns. Indeed, let us consider $z'\leq z $ columns of the same block. By construction of $\widetilde{\Hm}$ there is a linear dependency between them if and only if there exists a nonzero vector $\xv\in\fq^{z'}$ such that 
\[\sum_{i=1}^{z'}x_i = 0\quad\text{and}\quad\sum_{i=1}^{z'}r_ix_i = 0\]
where $r_i$ are some elements of $E$ (which are all distinct). This system admits a nonzero vectorial solution if and only if $z'\geq 3$.\\
 \\
 That is why we introduce the notion of bi-regular permutations which is inspired by regular permutations introduced in \cite{C:EssSan24}. It allows us to have exactly one or two columns of each block in $\widetilde{\Hm}_1$.
 
\begin{definition}\label{def:biregular}
    Let $\ev = (\ev_1,\dots,\ev_n)\in(\fq^z)^n$. For a permutation matrix $\Pm$ let
    \[\ev\Pm = (\ev_1',\dots,\ev_n',\ev_1'',\dots,\ev_n'',\ev_1''',\dots,\ev_n''')\]
    with $\ev_i'$ and $\ev_i''$ are vectors with one entry, and $\ev_i'''\in\fq^{z-2}$. $\Pm$ is \emph{bi-regular} if each $\ev_i'$, $\ev_i''$ and $\ev_i'''$ are formed only by coordinates from $\ev_i$.
\end{definition} 
Since columns from different blocks cannot have linear dependencies (because of the distinct rows of ones in the $n$ first rows), we obtain the following proposition.

\begin{proposition}
    Let $\widetilde{\Hm}\in\fq^{(2n-k)\times zn}$ defined as in \autoref{def:regular} from some $\Hm\in\fq^{(n-k)\times n}$. For $\Pm\in\fq^{zn\times zn}$ a permutation matrix, if $\Pm$ is bi-regular then $\widetilde{\Hm}\Pm$ can be put in systematic form by doing linear operations on its rows.
\end{proposition}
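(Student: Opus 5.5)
The plan is to write $\widetilde{\Hm}_1$, the first $2n-k$ columns of $\widetilde{\Hm}\Pm$, explicitly and to reduce its invertibility to a condition on $\Hm$ alone by elementary column operations. The key tool is the block structure of $\Um$ already used in the discussion above: columns from different blocks have disjoint supports in the first $n$ rows.

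\textbf{Shape of $\widetilde{\Hm}_1$.} Because $\Pm$ is bi-regular (\autoref{def:biregular}), the first $n$ columns of $\widetilde{\Hm}\Pm$ contain exactly one column from each block $i$. This column has the form $\binom{\mathbf{f}_i}{r_{a_i}\Hm_i}$, where $\mathbf{f}_i$ is the $i$-th unit vector of $\fq^n$. The next $n-k$ columns come from a set $S$ of $n-k$ distinct blocks, each contributing one further column $\binom{\mathbf{f}_i}{r_{b_i}\Hm_i}$ with $b_i\neq a_i$. So $\widetilde{\Hm}_1$ has at most two columns per block, in agreement with the three-column dependency argument given before \autoref{def:biregular}.

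\textbf{Column elimination.} For each $i\in S$, I will subtract the first-group column of block $i$ from its second-group column. This preserves the determinant and turns the second column into $\binom{\Omat}{(r_{b_i}-r_{a_i})\Hm_i}$. The matrix is then block lower-triangular, with the identity $\Imat_n$ in the top-left corner. Hence
\[\det\widetilde{\Hm}_1=\pm\prod_{i\in S}(r_{b_i}-r_{a_i})\cdot\det\Hm_S,\]
where $\Hm_S$ is the $(n-k)\times(n-k)$ submatrix of $\Hm$ formed by the columns indexed by $S$. The factors $r_{b_i}-r_{a_i}$ are nonzero because the elements of $E$ are distinct. Therefore $\widetilde{\Hm}_1$ is invertible, and multiplying $\widetilde{\Hm}\Pm$ on the left by $\widetilde{\Hm}_1^{-1}$ (row operations) puts it in systematic form, exactly when $\Hm_S$ is invertible.

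\textbf{Main obstacle: the condition on $\Hm_S$.} Bi-regularity alone cannot force $\Hm_S$ to be invertible; for instance, the statement fails trivially if $\Hm=\Omat$. I therefore expect to state the proposition under the condition that the complement of $S$ is an information set of $\CC^\perp(\Hm)$, i.e.\ that the columns of $\Hm$ indexed by $S$ are linearly independent. For a full-rank $\Hm$, such an $S$ always exists, and one can restrict to bi-regular permutations whose doubled blocks form such an $S$. When $\Pm$ is sampled uniformly and $\Hm$ behaves like a random matrix, this condition holds with constant probability, roughly $\prod_{j\ge1}(1-q^{-j})$. This is the same caveat that standard ISD permutations carry, so the complexity analysis is unaffected.
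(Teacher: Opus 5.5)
Your argument is correct, and it is sharper than the paper's. The paper gives no separate proof. It relies on the preceding discussion: each block must contribute one or two columns, and two columns of the same block are independent because the $r_j$ are distinct. It then adds the sentence that ``columns from different blocks cannot have linear dependencies because of the distinct rows of ones.'' That last claim is exactly the step that fails, and your column elimination shows why. The difference of the two columns of a doubled block vanishes on the $\Um$-rows and leaves $(r_{b_i}-r_{a_i})\Hm_i$. So cross-block dependencies among the columns of $\widetilde{\Hm}_1$ are precisely the dependencies among the $\Hm_i$, $i\in S$.

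Your identity $\det\widetilde{\Hm}_1=\pm\prod_{i\in S}(r_{b_i}-r_{a_i})\det\Hm_S$ turns this into an exact criterion. Your counterexample $\Hm=\Omat$ shows the statement as written needs the extra hypothesis you propose, and so does any full-rank $\Hm$ with two equal columns indexed by $S$. Even the paper's within-block argument tacitly assumes $\Hm_i\neq\Omat$.

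One refinement concerns the set $S$. Read literally, the paper's definition of a bi-regular permutation places $\ev_1'',\ldots,\ev_{n-k}''$ in positions $n+1,\ldots,2n-k$ of $\ev\Pm$. So $S=\{1,\ldots,n-k\}$ is fixed, and $\Pm$ only randomizes positions inside blocks. The missing hypothesis is then simply that the first $n-k$ columns of $\Hm$ are linearly independent. This is a one-time condition on $\Hm$, not an event of constant probability per sampled $\Pm$. For full-rank $\Hm$ it can always be arranged by permuting the coordinates of the \ResSD{} instance, which preserves $E^n$. With that correction your proof is complete, and the paper's complexity estimates are unaffected.
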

\subsubsection{Permutation-Based ISD}

This algorithm from \cite{C:EssSan24} was a direct adaptation of the original Prange algorithm \cite{P62} to the regular context. Although we are in a non-binary context, we can almost directly take the version of \cite{C:EssSan24}. Given an instance $\left(\widetilde{\Hm},\widetilde{\sv}\right)$ (defined as in \autoref{def:regular}), the algorithm randomly chooses a bi-regular permutation matrix $\Pm$ for computing $\left(\widetilde{\Hm}_1,\widetilde{\Hm}_2\right)\eqdef\widetilde{\Hm}\Pm$, where $\widetilde{\Hm}_1\in\fq^{(2n-k)\times(2n-k)}$ and $\widetilde{\Hm}_2\in\fq^{(2n-k)\times ((z-2)n+k)}$. Then it computes $\ev_1\eqdef \widetilde{\Hm}^{-1}_1\widetilde{\sv}\in\fq^{2n-k}$ and checks whether $\hnorm{\ev_1} = n$. If this is the case, it returns the solution $\widetilde{\ev} = \Pm^{-1}(\ev_1,\Omat)$. If not, the algorithm starts the process again.

\begin{remark}
    If the vector $\widetilde{\ev}$ has Hamming weight $n$, it will also be light-regular since the linear constraints of $\widetilde{\Hm}$ impose the sum on each of the $n$ blocks to be equal to 1.
\end{remark}

\begin{remark}
    In the notation of \autoref{def:biregular}, we impose $(\ev_{n-k+1}'',\dots,\ev_n'')$ and $(\ev_1''',\dots,\ev_n''')$ to be equal to $\Omat$.
\end{remark}
Assuming that the original solution is unique (as expected with CROSS parameters), we can see that the probability for choosing a right bi-regular permutation is exactly the probability of guessing:
\begin{itemize}
    \item for $n-k$ blocks, a set of two entries to which the nonzero value belongs.
    \item for $k$ blocks, the exact position of the nonzero value.
\end{itemize}
We obtain a probability of success that is
\begin{equation}\label{eq:Prange}
    \left(\dfrac{z-1}{\binom{z}{2}}\right)^{n-k}z^{-k} = 2^{-k}\left(\dfrac{2}{z}\right)^n
\end{equation}
The total average number of iteration of our adapted Permutation-Based ISD being the inverse of the probability of success, the complexity of the algorithm is at least in $\mathcal{O}\left(2^{k}\left(\dfrac{z}{2}\right)^n\right)$.

\subsubsection{Enumeration-Based ISD} We are now going to improve the algorithm using the ideas originally coming from \cite{FS09}. We again adapt to our case the adaptation from \cite{C:EssSan24}. This time $\widetilde{\Hm}$ must be put in the \emph{quasi-systematic} form, which is
$\begin{pmatrix}
	   \Imat_{2n-k-\ell} &  \widetilde{\Hm}_1\\ \Omat & \widetilde{\Hm}_2
\end{pmatrix}$ for an integer $\ell\geq 1$, with $\widetilde{\Hm}_1\in\fq^{(2n-k-\ell)\times((z-2)n + k + \ell)}$ and $\widetilde{\Hm}_2\in\fq^{\ell\times((z-2)n+k+\ell)}$. For that purpose, we need to transform $\widetilde{\Hm}$ by permuting its columns so that its $2n-k-\ell$ first columns form a full-rank matrix. The bi-regular permutations as we have defined them are suitable for this.

So, the algorithm starts by transforming $\widetilde{\Hm}$ into $\Qm\widetilde{\Hm}\Pm$ where $\Pm$ is a random bi-regular permutation and $\Qm$ is an invertible matrix giving the quasi-systematic form. Denoting $\ev = (\ev_1,\ev_2)$ and $\Qm\widetilde{\sv} = (\widetilde{\sv}_1,\widetilde{\sv}_2)$, we just need to solve $\ev_2\transp{\widetilde{\Hm}_2} = \widetilde{\sv}_2$, and we recover $\ev_1 = \widetilde{\sv}_1-\ev_2\transp{\widetilde{\Hm}_1}$. We are going to enumerate the vectors $\ev_2$ verifying $\ev_2\transp{\widetilde{\Hm}_2} = \widetilde{\sv}_2$ with some additional conditions on~$\ev_2$.

Because we have used a bi-regular permutation, we can retake the notation of \autoref{def:biregular} and write 
\begin{align*}
    \ev\quad\quad &= \quad\quad\quad\quad(\ev_1',\dots,\ev_n',\ev_1'',\dots,\ev_n'',\ev_1''',\dots,\ev_n''')\\
         (\ev_1,\ev_2)\quad       &= \quad((\ev_1',\dots,\ev_n',\ev_1'',\dots,\ev_{n-k-\ell}''),(\ev_{n-k-\ell +1}'',\dots,\ev_n'',\ev_1''',\dots,\ev_n'''))
\end{align*}
The vectors $\ev_2$ that we enumerate are those with elements $(\ev_{n-k-\ell +1}'',\dots,\ev_n'')$ set to zero and whose elements $(\ev_1''',\dots,\ev_n''')$ have a Hamming weight equal to a parameter~$p$. Let us denote $\yv\eqdef(\ev_1''',\dots,\ev_n''')$.
\paragraph{Enumerating $\yv$}
This vector is divided into $n$ blocks of $z-2$ entries, each with at most one nonzero entry per block (of value 1). We are going to apply a meet-in-the-middle strategy taking into account this structure. We define the set \[\mathcal{R}^n_p\eqdef\pset{\vv = (\vv_1,\dots,\vv_n)\in\left(\fq^{z-2}\right)^n\ \ \Big\vert\ \ 0\leq\hnorm{\vv_i}\leq1,\ \hnorm{\vv} = p,\ v_{i,j}\in\pset{0,1}}\] and write $\yv = (\yv_1,\yv_2)$ where each $\yv_i$ is composed of $n/2$ blocks.\\ Having $\ev_2\transp{\widetilde{\Hm}_2} = \widetilde{\sv}_2$ now means $\yv_1\transp{\Hm_2'} = \widetilde{\sv}_2 - \yv_2\transp{\Hm_2''}$, where $\widetilde{\Hm}_2 = (\Hm_2,\Hm_2',\Hm_2'')$ with $\Hm_2\in\fq^{\ell\times(k+\ell)}$ and $\Hm'$, $\Hm''\in\fq^{\ell\times(z-2)\frac{n}{2}}$.

We construct three lists $L_1$, $L_2$ and $L$ where $L_i = \pset{\yv_i\ \vert\ \yv_i \in\mathcal{R}^{n/2}_{p/2}}$ and $L = \pset{(\yv_1,\yv_2)\in L_1\times L_2\ \vert\ \yv_1\transp{\Hm_2'}+\yv_2\transp{\Hm_2''} = \widetilde{\sv}_2}$. Then for all $\yv\in L$, we choose $\ev_2 = (\Omat,\yv)$ and compute $\ev_1 = \widetilde{\sv}_1-\ev_2\transp{\widetilde{\Hm}_1}$. The algorithm then checks whether $\hnorm{\ev_1} = n-p$. If this is the case, it returns $\widetilde{\ev} = \Pm^{-1}(\ev_1,\ev_2)$ as a solution. If not, the algorithm starts again.\\
\\
The entire algorithm is described in \autoref{alg:EnBased}.

\begin{algorithm}[H]
\caption{Enumeration-Based algorithm}\label{alg:EnBased}
\textbf{Parameters:} $\ell\leq 2n-k$ and $p\leq n$,\\
\textbf{Inputs:} Parity-check matrix $\widetilde{\Hm}\in\fq^{(2n-k)\times zn}$, syndrome $\widetilde{\sv}\in\fq^{2n-k}$\\
\textbf{Output:} Light-regular vector $\widetilde{\ev}\in\fq^{zn}$ of weight $n$ such that $\widetilde{\ev}\transp{\widetilde{\Hm}}$
\begin{algorithmic}[1]
\Loop
    \State Sample a random bi-regular permutation $\Pm$.
    \State $\begin{pmatrix}
	           \Imat_{2n-k-\ell} &  \widetilde{\Hm}_1\\ \Omat & \widetilde{\Hm}_2
            \end{pmatrix} \gets \Qm\widetilde{\Hm}\Pm$, $(\widetilde{\sv}_1,\widetilde{\sv}_2) \gets \Qm\widetilde{\sv}$
     \State $L_i \gets \pset{\yv_i\ \vert\ \yv_i \in\mathcal{R}^{n/2}_{p/2}}$, $i = 1,2$
     \State $L \gets \pset{(\yv_1,\yv_2)\in L_1\times L_2\ \vert\ (\Omat,\yv_1,\yv_2)\transp{\widetilde{\Hm}_2} = \widetilde{\sv}_2}$
     \For {$\yv\in L$}
     \State $\ev_1 = \widetilde{\sv}_1-(\Omat,\yv)\transp{\widetilde{\Hm}_1}$
     \If{$\hnorm{\ev_1} = n-p$}
        \State\Return $\Pm^{-1}(\ev_1,\ev_2)$
     \EndIf
     \EndFor
\EndLoop
\end{algorithmic}
\end{algorithm}

\paragraph{Complexity} We still assume that the solution is unique. Following \cite{C:EssSan24}, we can estimate that the time of enumeration will be approximately in \[\mathcal{T}_0 = n\max(\card{L_1},\card{L_2},\card{L})\] with $\E(\card{L})=\dfrac{\card{L_1\times L_2}}{q^\ell}$.  As in the previous algorithm, we must compute the probability of choosing a good permutation $\Pm$. This corresponds to the probability of having the following:
\begin{itemize}
    \item $\yv_1$ and $\yv_2$ each of Hamming weight $p/2$ which is according to binomial law \Big(for each block the probability of having weight 1 being $\dfrac{z-2}{z}$\Big): \[\mathcal{P}_0 =\binom{n/2}{p/2}^2\left(\dfrac{z-2}{z}\right)^p\left(\dfrac{2}{z}\right)^{n-p}\]
    \item having $(\ev_{n-k-\ell +1}'',\dots,\ev_n'')$ equal to $\Omat$. We denote this probability by $\mathcal{P}$.
\end{itemize}
For computing $\mathcal{P}$, we first suppose that $k+\ell\geq n/2$. We have $\mathcal{P} = \mathcal{P}_1\mathcal{P}_2$ where $\mathcal{P}_1\eqdef\Prob\pset{(\ev_{n-k-\ell +1}'',\dots,\ev_{n/2}'') = \Omat}$ and $\mathcal{P}_2\eqdef\Prob\pset{(\ev_{n/2}'',\dots,\ev_n'') = \Omat}$. For $\mathcal{P}_2$, we know that there are $p/2$ blocks where the nonzero entries are already set, and for the entries from other blocks the probability to be nonzero is $1/2$ because there remain only two positions. So $\mathcal{P}_2 = \left(\dfrac{1}{2}\right)^{n/2-p/2}$. For $\mathcal{P}_1$, we first compute the probability of having $i$ entries from blocks where the nonzero entry has not already been set in $\yv$ (there are $n/2-p/2$ of such blocks):
\[P_i = \dfrac{\binom{k+\ell -n/2}{i}\binom{n-k-\ell}{n/2-p/2-i}}{\binom{n/2}{p/2}}\quad\text{and so}\quad\mathcal{P}_1 = \sum_{i=0}^{k+\ell-n/2}P_i\ 2^{-i}\]
For the case where $k+\ell\leq n/2$, we have \[\mathcal{P} = \sum_{i=0}^{k+\ell}P'_i\ 2^{-i}\quad\text{where}\quad P'_i = \dfrac{\binom{k+\ell}{i}\binom{n/2-k-\ell}{n/2-p/2-i}}{\binom{n/2}{p/2}}\]
We can now estimate the hole complexity by (when $k+\ell\geq n/2$):
\begin{multline*}
    \dfrac{\mathcal{T}_0}{\mathcal{P}_0\mathcal{P}_1\mathcal{P}_2} =
    \dfrac{n2^{n/2-p/2}\max\left(\binom{n/2}{p/2}(z-2)^{p/2},\binom{n/2}{p/2}^2(z-2)^p/q^\ell\right)}{\binom{n/2}{p/2}^2\left(\dfrac{z-2}{z}\right)^p\left(\dfrac{2}{z}\right)^{n-p}\sum_{i=0}^{k+\ell-n/2}\dfrac{\binom{k+\ell -n/2}{i}\binom{n-k-\ell}{n/2-p/2-i}}{\binom{n/2}{p/2}}2^{-i}} \\
    = \dfrac{n2^{n/2-p/2}\max\left((z-2)^{p/2},\binom{n/2}{p/2}(z-2)^p/q^\ell\right)}{\left(\dfrac{z-2}{z}\right)^p\left(\dfrac{2}{z}\right)^{n-p}\sum_{i=0}^{k+\ell-n/2}\binom{k+\ell -n/2}{i}\binom{n-k-\ell}{n/2-p/2-i}2^{-i}}
\end{multline*}
Memory complexity is proportional to $\card{L_1} = \binom{n/2}{p/2}(z-2)^{p/2}$.

\paragraph{Complexities for CROSS} We give in \autoref{tab:cross-ISD} the complexities of our adapted ISD algorithms, giving also the best parameters $p$ and $\ell$ to run the Enumeration-Based ISD. We recall that with the CROSS parameters, our regular code has length $7n$ and dimension $5n+k$. We can notice that the rate of attack's complexity to the security parameter remains roughly the same: 0.65 for 128, 0.67 for 192, and 0.67 for 256. The Enumeration-Based algorithm has a high memory complexity due to the use of lists.

\begin{table}[htbp]
\setlength{\tabcolsep}{0.4em}
\centering
    \begin{tabular}{c|c|c|c|c}
    \toprule
       Security & $(n,k)$ & Permutation-Based costs & Enumeration-Based costs & $(p,\ell)$\\
      \midrule
        128 & (127,76) & 325 & 197\quad (162) & $(99,23)$\\
        192 & (187,111) & 470 & 285\quad (238) & $(144,34)$\\
        256 & (251,150) & 625 & 380\quad  (320) & $(196,46)$ \\
    \bottomrule
    \end{tabular}
\vspace{0.25em}
\caption{Complexities of our adapted ISD for CROSS ($q=127$ and $z = 7$). Costs are given in bits, and memory complexities are in parenthesis.}
\label{tab:cross-ISD}
\end{table}

\subsubsection{Conclusion}
We have seen that adapting the ISD algorithms for our case does not give competitive attacks. The main reason is because the rate of the regular code is $\dfrac{(z-2)n+k}{zn}$, which tends to be high when $z$ is not too small, as in CROSS where $z=7$. This means that one must guess a large number of variables while applying the adapted Prange algorithm or its refinements.
The complexity could probably be reduced by using other ISD variants but not enough to make it competitive.
The fact that it is more effective to apply the adapted ISD directly to the original $\ResSD{}$ problem as was done in the CROSS specification could possibly have been anticipated.
Indeed, we saw in~$\eqref{eq:Prange}$ that our Prange's average number of iterations is $2^{k}\left(\dfrac{z}{2}\right)^n$, which is strictly greater than $z^k$ because $n>k$. And we can notice that if we ``adapt'' Prange's algorithm to the original $\ResSD{}$ problem directly, we will have to ``guess'' $k$ values that can take $z$ values each (this is the attack of \autoref{st:basic-attack}), making the average number of iterations equal to $z^k$. 

There are also algebraic (or hybrid) algorithms for solving $\RegSD{}$ as studied in \cite{EC:BriOyg23,WWYLYZW25}, in both binary and non-binary cases. However, although they are likely to provide better parameters (the algorithm described in \cite{WWYLYZW25} has the lowest complexities in many parameters regimes), it would also probably not be enough to have interesting complexities. Indeed, if we observe the gaps between their complexities and the ones of adapted ISD given in \cite[Tab.2]{WWYLYZW25} for codes of approximately the same lengths as ours, the gain is up to 15 bits.


\section{Lattice-based Modelings: Reduction of ResSD to CVP}
\label{sec:cvp}

From now on, we will work in $\F_p$ with $p$ a prime number, and $E = \pset{r_1,\dots,r_z}$ is a fixed subset of $\F_p$.

For every matrix $\Mm$ or vector $\vv$ over $\F_p$, we denote by $\Z(\Mm)$ (or $\Z(\vv)$) the matrix (or vector) over $\Z$ whose entries have the same values as those in $\Mm$ (or $\vv$), with identification $\F_p \simeq \{0,\ldots,p-1\}$. Conversely, for a matrix $\Nm$ over $\Z$, we denote by $\F_p(\Nm)$ the matrix over $\F_p$ whose entries are the projection of those of $\Nm$ (same for vectors).
For any linear code $\mathcal{\CC}$,
define
$\Z(\mathcal{\CC})\eqdef\pset{\Z(\vv)\ \vert\ \vv\in\mathcal{\CC}}$. We also consider the lattice $\Z(\CC)+p\Z^n$ that contains $\CC$ precisely.

\begin{definition}\label{def:LatCode}
    Given $\CC\eqdef\CC(\Gm)\subset\F_p^n$ a linear code of dimension $k$ with $\Gm = (\Imat_k,\Rm)$ in systematic form, we define
    \[\LL(\CC) \eqdef \LL\left(\Am^\CC\right) \quad \text{with} \quad\Am^\CC \eqdef\begin{pmatrix}
        \Imat_{k}\ & \Z\left(\Rm\right) \\ \Omat\ & p\Imat_{n-k}
    \end{pmatrix}\in\Z^{n\times n} \]
\end{definition}

\begin{remark}
    If $\Gm$ cannot be put in a systematic form  as is and requires multiplication by a permutation matrix $\Pm\in\F_p^{n\times n}$, we instead consider $\CC \eqdef \CC\left(\Gm\Pm\right)$ and define $\LL(\CC)\eqdef\LL\left(\Am^\CC\Pm^{-1}\right)$.
\end{remark}
Then we can easily check the following proposition.

\begin{proposition}\label{prop:lcbij}
    Let $\CC \subset\F_p^n$ be a linear code. The mapping
    \begin{align*}
	   \Z^n &\longrightarrow \F_p^n\\
	   \uv\ &\longrightarrow \F_p(\uv)
    \end{align*} forms a surjection from $\LL(\CC)$ onto $\CC$, and a bijection between $\LL(\CC)\cap\pset{0,\dots, p-1}^n$ and~$\CC$.    
\end{proposition}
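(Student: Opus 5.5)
The plan is to reduce to the systematic case and then argue through the explicit basis $\Am^\CC$. By the remark after \autoref{def:LatCode}, the general case is obtained by composing with a coordinate permutation $\Pm^{-1}$. Permuting coordinates commutes with $\F_p(\cdot)$ and preserves the box $\pset{0,\dots,p-1}^n$. So it suffices to treat $\Gm=(\Imat_k,\Rm)$ and $\LL(\CC)=\LL(\Am^\CC)$.

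The first step is a clean description of the lattice: I will show that $\LL(\Am^\CC)=\Z(\CC)+p\Z^n$, i.e. that it equals the set of $\uv\in\Z^n$ with $\F_p(\uv)\in\CC$.
\begin{itemize}
\item For the inclusion $\subseteq$: any lattice vector is $\xv\Am^\CC$ with $\xv=(\xv_1,\xv_2)\in\Z^k\times\Z^{n-k}$, which equals $(\xv_1,\ \xv_1\Z(\Rm)+p\xv_2)$. Reducing mod $p$ gives $\F_p(\xv_1)\Gm\in\CC$, so this also shows that the map sends $\LL(\CC)$ into $\CC$.
\item For the inclusion $\supseteq$: take $\uv=(\uv_1,\uv_2)$ with $\F_p(\uv)\in\CC$. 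Since $\Gm$ is systematic, $\F_p(\uv)=\F_p(\uv_1)\Gm$, hence $\uv_2\equiv\uv_1\Z(\Rm)\pmod p$. Writing $\uv_2=\uv_1\Z(\Rm)+p\xv_2$ with $\xv_2\in\Z^{n-k}$ gives $\uv=(\uv_1,\xv_2)\Am^\CC$.
\end{itemize}

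Surjectivity then follows directly. For $\cv\in\CC$, the vector $\Z(\cv)$ satisfies $\F_p(\Z(\cv))=\cv\in\CC$, so $\Z(\cv)\in\LL(\CC)$ by the description above. Moreover $\Z(\cv)$ lies in $\pset{0,\dots,p-1}^n$, which gives surjectivity even when the map is restricted to the box.

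Injectivity on the box is immediate. If $\uv,\uv'\in\pset{0,\dots,p-1}^n$ satisfy $\F_p(\uv)=\F_p(\uv')$, then each coordinate difference is divisible by $p$ and has absolute value less than $p$, so it is zero. Hence the restricted map is a bijection, with inverse $\cv\mapsto\Z(\cv)$.

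The only step requiring care is the $\supseteq$ inclusion, where systematic form is used to rebuild the integer coefficient vector. There is a second point I would check explicitly: that the remark's definition $\LL(\Am^\CC\Pm^{-1})$ matches $\Z(\CC)+p\Z^n$ for the original, unpermuted code. This holds because the permuted code is $\CC\Pm$, and $\Pm^{-1}$ maps $\Z(\CC\Pm)+p\Z^n$ onto $\Z(\CC)+p\Z^n$.
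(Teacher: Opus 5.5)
Your proof is correct and follows the route the paper intends. The paper gives no written proof (it only says the statement is easily checked), but it identifies $\LL(\CC)$ with $\Z(\CC)+p\Z^n$ just before the definition, and that identification, $\LL(\CC)=\pset{\uv\in\Z^n \mid \F_p(\uv)\in\CC}$, is exactly what you prove from the systematic basis. You then read off surjectivity and injectivity on the box, and your treatment of the permuted case via $\Pm^{-1}$ fills in a detail the paper leaves implicit (it also shows $\LL(\CC)$ does not depend on the chosen permutation).
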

From the structure of the generator matrix $\Am^\CC$ one can immediately deduce the volume of the lattice, equal to the determinant of the matrix.

\begin{proposition}\label{prop:vol}
    For any linear code $\CC\subset\F_p^n$ of dimension $k$, $\Vol{\LL(\CC)}=p^{n-k}$.
\end{proposition}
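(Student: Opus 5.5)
The plan is to compute the volume directly from the generator matrix $\Am^\CC$ of \autoref{def:LatCode}, and then reduce the general case to the systematic one through the preceding remark.

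\textbf{Systematic case.} Assume first that $\Gm=(\Imat_k,\Rm)$. Then
\[\Am^\CC=\begin{pmatrix}\Imat_k & \Z(\Rm)\\ \Omat & p\Imat_{n-k}\end{pmatrix}\]
is a square, block upper-triangular integer matrix. Its determinant is the product of the determinants of the diagonal blocks, namely $1\cdot p^{n-k}$. Since $\Am^\CC$ is square, we have
\[\Vol{\LL(\CC)}=\sqrt{|\det(\Am^\CC\transp{(\Am^\CC)})|}=|\det\Am^\CC|=p^{n-k}.\]
The only point to check is that the rows of $\Am^\CC$ are linearly independent over $\R$, so that the volume formula for the lattice they generate applies. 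This is immediate because the determinant is nonzero.

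\textbf{Non-systematic case.} Suppose $\Gm$ needs a column permutation $\Pm$ before it can be put in systematic form. The remark after \autoref{def:LatCode} then defines the generator matrix as $\Am^{\CC'}\Pm^{-1}$, where $\CC'=\CC(\Gm\Pm)$ is also a code of dimension $k$. A permutation matrix has determinant $\pm1$, so
\[|\det(\Am^{\CC'}\Pm^{-1})|=|\det\Am^{\CC'}|=p^{n-k}.\]

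\textbf{Well-definedness.} The matrix $\Rm$ in the systematic form is unique once the information set is fixed, since row reduction of $\Gm$ gives a unique $(\Imat_k,\Rm)$. The volume, in any case, does not depend on $\Rm$ at all. There is no real obstacle here; the only mild subtlety is this well-definedness remark, which is settled because the determinant is independent of $\Rm$ and of the chosen permutation.
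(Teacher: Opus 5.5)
Your proof is correct and is the same as the paper's. The paper simply notes that the volume equals the determinant of the block upper-triangular matrix $\Am^\CC$, namely $p^{n-k}$. Your handling of the permuted case ($|\det \Pm^{-1}|=1$) and of well-definedness only makes explicit what the paper leaves implicit.
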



\subsection{Reduction}

Let $(\Hm,\sv)\in\F_p^{(n-k)\times n}\times\F_p^{n-k}$, and $\left(\widetilde{\Hm},\widetilde{\sv}\right)\in\F_q^{(2n-k)\times zn}\times\F_q^{2n-k}$ which is defined as in \autoref{def:regular}. We denote $\LL^*\eqdef\LL\left(\widetilde{\CC}\right)$ where $\widetilde{\CC}\eqdef\CC^\perp\left(\widetilde{\Hm}\right)$. Let $\widetilde{\yv}$ be some vector of $\F_p^n$ such that $\widetilde{\yv}\transp{\widetilde{\Hm}} = \widetilde{\sv}$ and $\yv^*\eqdef \Z(\widetilde{\yv})$. Let $S^*$ be the set of the solutions of \CVP{} for the instance $(\LL^*,\yv^*)$, and $\widetilde{S}$ be the set of solutions of \RegSD{} of Hamming weight $n$ for the instance $\left(\widetilde{\Hm},\widetilde{\sv}\right)$.

For the reduction, we will firstly need the two following lemmas giving relations between our code and our lattice.

\begin{lemma}\label{lem:EuclToHamm}
    Let $\uv^*\in\LL^*$ be such that $\enorm{\yv^*-\uv^*} \leq\sqrt{n}$. Then $\widetilde{\ev}\eqdef\F_p(\yv^*-\uv^*)$ belongs to $\widetilde{S}$.
\end{lemma}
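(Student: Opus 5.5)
Set $\xv^*\eqdef\yv^*-\uv^*\in\Z^{zn}$ and $\widetilde{\ev}\eqdef\F_p(\xv^*)$. The plan has three steps: show $\widetilde{\ev}$ satisfies the syndrome equation, use the block constraints to force at least one nonzero integer entry per block, and then let the norm bound $n$ squeeze every block down to a single entry equal to $1$.

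\emph{Syndrome.} By \autoref{prop:lcbij}, $\F_p(\uv^*)$ lies in $\widetilde{\CC}$, so $\F_p(\uv^*)\transp{\widetilde{\Hm}}=\Omat$. Since $\F_p(\yv^*)=\widetilde{\yv}$ and reduction mod $p$ is additive, we get $\widetilde{\ev}\transp{\widetilde{\Hm}}=\widetilde{\yv}\transp{\widetilde{\Hm}}=\widetilde{\sv}$.

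\emph{At least one nonzero per block.} Split $\xv^*$ into $n$ blocks $\xv^*_1,\dots,\xv^*_n\in\Z^z$. The first $n$ rows of $\widetilde{\Hm}$, namely $\Um$, together with the first $n$ entries of $\widetilde{\sv}$, force the sum of each block of $\widetilde{\ev}$ to equal $1$ in $\F_p$. Hence each integer block sum is congruent to $1$ modulo $p$, so it is nonzero, and every $\xv^*_i$ has at least one nonzero integer entry. Consequently $\enorm{\xv^*_i}^2\ge 1$ for every $i$.

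\emph{Squeezing.} Summing over blocks gives $n\le\sum_i\enorm{\xv^*_i}^2=\enorm{\xv^*}^2\le n$, so every block has squared norm exactly $1$. Each $\xv^*_i$ therefore has exactly one nonzero entry, equal to $\pm1$. Its block sum is then $\pm1$, and this must be $\equiv 1 \pmod p$. Here $-1\not\equiv 1$ provided $p>2$, which holds for the prime fields considered; the case $p=2$ is harmless since $-1=1$ in $\F_2$ anyway. So the nonzero entry of $\widetilde{\ev}$ in each block is $1$.

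This shows $\widetilde{\ev}$ is light-regular of Hamming weight $n$ and satisfies $\widetilde{\ev}\transp{\widetilde{\Hm}}=\widetilde{\sv}$, i.e.\ $\widetilde{\ev}\in\widetilde{S}$ (equivalently, one may invoke \autoref{lem:ResRegBij}).

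The main point needing care is the second step. The norm bound holds over the integers, whereas the block constraints hold only modulo $p$; the argument works because a block that is zero over $\Z$ would reduce to a block sum of $0\neq1$ in $\F_p$. Everything else is bookkeeping.
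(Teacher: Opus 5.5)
Your proof is correct and is essentially the paper's argument. The paper observes that $\hnorm{\widetilde{\ev}}>n$ would force $\enorm{\yv^*-\uv^*}>\sqrt{n}$, and then hands the block-sum reasoning (at least one nonzero entry per block, hence exactly one, equal to $1$) to \autoref{lem:ResRegBij}; you carry out that same reasoning inline at the integer level, which also shows that for $p>2$ the integer vector $\yv^*-\uv^*$ is itself the $0/1$ lift $\Z(\widetilde{\ev})$.
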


\begin{proof}
    Let us denote $\widetilde{\uv} = \F_p(\uv^*)$. We have $\widetilde{\ev}\transp{\widetilde{\Hm}} = \widetilde{\yv}\transp{\widetilde{\Hm}} - \widetilde{\uv}\transp{\widetilde{\Hm}} = \widetilde{\sv}$ because $\widetilde{\uv}\in\CC^\perp\left(\widetilde{\Hm}\right)$ thanks to \autoref{prop:lcbij}. If we suppose that $\hnorm{\widetilde{\ev}} > n$, then we will obviously have $\enorm{\yv^* - \uv^*}>\sqrt{n}$ which is not possible. So $\hnorm{\widetilde{\ev}} \leq n$ and by \autoref{lem:ResRegBij} we can deduce that $\widetilde{\ev}\in L_n^{zn}$, so $\widetilde{\ev}\in \widetilde{S}$.\qed
\end{proof}

\begin{lemma}\label{lem:HammToEucl}
    Let $\widetilde{\ev}\in\widetilde{S}$ and we denote $\ev^*\eqdef\Z(\widetilde{\ev})$. Then $\enorm{\ev^*} = \sqrt{n}$ and $\uv^* \eqdef \yv^*-\ev^*$ belongs to~$\LL^*$.
\end{lemma}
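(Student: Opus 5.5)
The argument splits into two independent parts: a norm computation for $\ev^*$, followed by a membership argument for $\uv^*$ that reduces to the code $\widetilde{\CC}$.

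\emph{Norm of $\ev^*$.} By \autoref{lem:ResRegBij}, any $\widetilde{\ev}\in\widetilde{S}$ lies in $L^{zn}_n$. Thus each of its $n$ blocks has exactly one nonzero entry, equal to $1$. Under the identification $\F_p\simeq\pset{0,\dots,p-1}$, the lift $\ev^*=\Z(\widetilde{\ev})$ is therefore a $0/1$ vector with exactly $n$ ones. Hence $\enorm{\ev^*}^2=n$.

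\emph{Membership of $\uv^*$ in $\LL^*$.} The claim I would aim for first is a characterization of the lattice:
\[\LL(\widetilde{\CC})=\pset{\xv\in\Z^{zn}\ \vert\ \F_p(\xv)\in\widetilde{\CC}}=\Z(\widetilde{\CC})+p\Z^{zn}.\]
Given this, it suffices to show that $\F_p(\uv^*)\in\widetilde{\CC}$. Reduction modulo $p$ is additive and $\F_p(\Z(\vv))=\vv$, so
\[\F_p(\uv^*)=\widetilde{\yv}-\widetilde{\ev},\qquad (\widetilde{\yv}-\widetilde{\ev})\transp{\widetilde{\Hm}}=\widetilde{\sv}-\widetilde{\sv}=\Omat .\]
Therefore $\F_p(\uv^*)\in\CC^\perp(\widetilde{\Hm})=\widetilde{\CC}$, as required.

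\emph{Proving the characterization.} This is the only step that is not immediate from earlier statements, since \autoref{prop:lcbij} as stated only gives surjectivity. I would argue directly from $\Am^{\widetilde{\CC}}$, first in systematic form and then handling the permuted case as in the remark after \autoref{def:LatCode}.

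The inclusion $\subseteq$ holds because every row of $\Am^{\widetilde{\CC}}$ reduces modulo $p$ into $\widetilde{\CC}$: the top rows reduce to rows of $\Gm$, and the $p\Imat$ rows reduce to $\Omat$.

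For $\supseteq$, take $\xv$ with $\F_p(\xv)=\cv=\mathbf{a}\Gm\in\widetilde{\CC}$. Subtract the integer combination of the top rows with coefficients $\Z(\mathbf{a})$. The difference has its first $\dim\widetilde{\CC}$ coordinates equal to $0$ and all coordinates divisible by $p$, so it is an integer combination of the $p\Imat$ rows.

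If a column permutation $\Pm$ is needed, the same argument goes through after multiplying by $\Pm^{-1}$, since permutations commute with reduction mod $p$ and preserve $p\Z^{zn}$. I expect this bookkeeping to be the main, though minor, obstacle; everything else is direct computation.
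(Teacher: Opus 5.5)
Your proof has the same two steps as the paper's. Light-regularity from \autoref{lem:ResRegBij} gives $\enorm{\ev^*}=\sqrt{n}$, and linearity gives $\F_p(\uv^*)=\widetilde{\yv}-\widetilde{\ev}\in\widetilde{\CC}$. The paper then gets $\uv^*\in\LL^*$ by citing \autoref{prop:lcbij} directly. You are right that this needs a bit more. The vector $\uv^*=\yv^*-\ev^*$ can have entries equal to $-1$, so it need not lie in the box $\pset{0,\dots,p-1}^{zn}$ where the bijection of \autoref{prop:lcbij} applies. Proving $\LL(\widetilde{\CC})=\Z(\widetilde{\CC})+p\Z^{zn}$, which the paper only alludes to before \autoref{def:LatCode}, is the right way to close this.

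However, your proof of the inclusion $\supseteq$ slips in exactly this situation. Let $k'=\dim\widetilde{\CC}$. After subtracting the combination of the top rows of $\Am^{\widetilde{\CC}}$ with coefficients $\Z(\av)$, the first $k'$ coordinates of the difference are $x_i-\Z(a_i)$. These are divisible by $p$ but not necessarily zero: for example $x_i=-1$ and $\Z(a_i)=p-1$ give $-p$. Since the rows $(\Omat,p\Imat)$ vanish on those coordinates, the difference is not yet a combination of them.

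Either of two fixes works:
\begin{itemize}
\item Use the integers $x_1,\dots,x_{k'}$ themselves as coefficients. They reduce to $\av$ modulo $p$, so the first $k'$ coordinates cancel exactly and the remaining ones are still $\equiv 0 \pmod p$.
\item Observe that $p$ times the $i$-th top row, minus suitable integer multiples of the bottom rows, equals $p$ times the $i$-th standard basis vector. Hence $p\Z^{zn}\subseteq\LL^*$, which absorbs the discrepancy.
\end{itemize}
With either correction your argument is complete, and slightly more rigorous than the paper's.
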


\begin{proof}
    By \autoref{lem:ResRegBij} we know that $\widetilde{\ev}\in L^{zn}_n$, so $\enorm{\ev^*} = \sqrt{n}$. Let us denote $\widetilde{\uv} = \F_p(\uv^*)$.  We can check that $\widetilde{\uv}\transp{\widetilde{\Hm}} = \widetilde{\yv}\transp{\widetilde{\Hm}} - \widetilde{\ev}\transp{\widetilde{\Hm}} = \widetilde{\sv} - \widetilde{\sv} = \Omat$. So we have $\widetilde{\uv}\in\CC^\perp\left(\widetilde{\Hm}\right)$ and by \autoref{prop:lcbij} we can deduce that $\uv^*\in\LL^*$.\qed
\end{proof}
We now compute the distance of a solution of our \CVP{} instance to the target.

\begin{lemma}\label{lem:closestNorm}
    If we assume that $\widetilde{S}\ne\emptyset$, then we have \[S^*=\pset{\uv^*\in\LL^*\ \vert\ \enorm{\yv^*-\uv^*} = \sqrt{n}}\]
\end{lemma}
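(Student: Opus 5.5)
The plan is to show that, whenever $\widetilde{S}\ne\emptyset$, the minimum distance from $\yv^*$ to $\LL^*$ is exactly $\sqrt{n}$. The claimed description of $S^*$ then follows at once, since $S^*$ is by definition the set of lattice points attaining that minimum. I will prove an upper bound and a matching lower bound on this distance.

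For the upper bound, pick any $\widetilde{\ev}\in\widetilde{S}$. By \autoref{lem:HammToEucl}, the vector $\uv^*\eqdef\yv^*-\Z(\widetilde{\ev})$ belongs to $\LL^*$ and satisfies $\enorm{\yv^*-\uv^*}=\enorm{\Z(\widetilde{\ev})}=\sqrt{n}$. Hence the distance from $\yv^*$ to $\LL^*$ is at most $\sqrt{n}$.

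For the lower bound, take any $\uv^*\in\LL^*$ and set $\dv\eqdef\yv^*-\uv^*\in\Z^{zn}$. The vector $\widetilde{\ev}\eqdef\F_p(\dv)$ satisfies $\widetilde{\ev}\transp{\widetilde{\Hm}}=\widetilde{\sv}$; this is the same computation as in the proof of \autoref{lem:EuclToHamm}, using \autoref{prop:lcbij}. I then argue block by block. The first $n$ rows of $\widetilde{\Hm}$ force the $z$ coordinates of each block $i$ of $\dv$ to sum to $1$ modulo $p$. In particular, each block of $\dv$ contains at least one nonzero integer. Every nonzero integer has square at least $1$, so block $i$ contributes at least $1$ to $\enorm{\dv}^2$. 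Summing over the $n$ blocks gives $\enorm{\dv}^2\ge n$. Thus every lattice vector is at distance at least $\sqrt{n}$ from $\yv^*$, and the minimum distance is exactly $\sqrt{n}$.

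The main subtlety is this lower bound. \autoref{lem:EuclToHamm} only treats vectors at distance at most $\sqrt{n}$, so it cannot by itself exclude a lattice vector strictly closer than $\sqrt{n}$. The block-sum argument above closes that gap. It only needs nonzero integer entries to have absolute value at least $1$ and uses nothing about $\Hm^E$. Having shown the minimum distance equals $\sqrt{n}$, I conclude that $S^*$ is exactly the set of $\uv^*\in\LL^*$ with $\enorm{\yv^*-\uv^*}=\sqrt{n}$.
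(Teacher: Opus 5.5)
Your proof is correct, and it differs from the paper's only in how the lower bound is obtained. The upper bound is exactly the paper's, via \autoref{lem:HammToEucl}.

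For the lower bound, the paper argues by contradiction. If the minimal distance $R$ were below $\sqrt{n}$, a minimizer $\uv^*_2$ would satisfy the hypothesis of \autoref{lem:EuclToHamm}. Then $\F_p(\yv^*-\uv^*_2)\in\widetilde{S}$ has Hamming weight $n$, which forces $\enorm{\yv^*-\uv^*_2}\ge\sqrt{n}$. Your block-sum argument rests on the same fact, namely the first step of the proof of \autoref{lem:ResRegBij}. The difference is that you apply it directly to every lattice point. This gives the unconditional bound $\enorm{\yv^*-\uv^*}\ge\sqrt{n}$ for all $\uv^*\in\LL^*$, without going through \autoref{lem:EuclToHamm} or the light-regular characterization. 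It is shorter, and it makes clear that the hypothesis $\widetilde{S}\ne\emptyset$ is needed only for the upper bound.

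One correction to your commentary: \autoref{lem:EuclToHamm} does cover lattice vectors strictly closer than $\sqrt{n}$, since distance $<\sqrt{n}$ is in particular $\le\sqrt{n}$. That is exactly how the paper rules them out. So there was no gap for your argument to close; it is an alternative route, not a repair.
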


\begin{proof}
    We assumed $\widetilde{S}\ne\emptyset$, so we can choose $\widetilde{\ev}_1\in\widetilde{S}$. We also define $\uv^*_1 = \yv^*-\ev^*_1$ and $\widetilde{\uv}_1\eqdef\F_p(\uv^*_1)$. Thanks to \autoref{lem:HammToEucl} we know that $\enorm{\ev^*_1} = \sqrt{n}$ and $\uv^*_1\in\LL^*$. So we have obtained that $\uv^*_1\in\pset{\uv^*\in\LL^*\ \vert\ \enorm{\yv^*-\uv^*} = \sqrt{n}}$. This means by the definition of \CVP{} that
    \begin{equation}\label{eq:S*}
        S^*=\pset{\uv^*\in\LL^*\ \vert\ \enorm{\yv^*-\uv^*} = R}\quad\text{for\  some}\quad 0\leq R \leq \sqrt{n}.
    \end{equation}
    We now assume that $R < \sqrt{n}$.
    Let $\uv^*_2\in S^*$. We denote $\ev^*_2\eqdef \yv^*-\uv^*_2$, with $\widetilde{\ev}_2\eqdef\F_p(\ev^*_2)$. We know that $\enorm{\ev^*_2}=R$ because of \eqref{eq:S*} so by \autoref{lem:EuclToHamm} we have $\widetilde{\ev}_2\in \widetilde{S}$. So $\hnorm{\widetilde{\ev}_2} = n$ implying $\enorm{\ev^*_2}\geq\sqrt{n}$ which is a contradiction to the fact that $R<\sqrt{n}$. This means that $R =\sqrt{n}$.\qed
\end{proof}
We can now prove the following proposition, which gives in fact a reduction from \RegSD{} to \CVP{} (with our particular instances).

\begin{proposition}\label{prop:solBij}
    If we assume that $\widetilde{S}\ne\emptyset$, then the mapping \begin{align*}
	   \psi:\Z^{zn} &\longrightarrow\quad \F_p^{zn}\\
	   \uv^* \ &\longrightarrow\ \F_p(\yv^*-\uv^*)
    \end{align*} forms a bijection between $S^*$ and $\widetilde{S}$.
\end{proposition}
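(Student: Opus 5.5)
We will show that $\psi$ maps $S^*$ into $\widetilde{S}$, that it is injective on $S^*$, and that every element of $\widetilde{S}$ has a preimage in $S^*$. All three steps rest on \autoref{lem:closestNorm}, \autoref{lem:EuclToHamm} and \autoref{lem:HammToEucl}.

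\emph{Well-definedness.} Let $\uv^*\in S^*$. Since $\widetilde{S}\ne\emptyset$, \autoref{lem:closestNorm} gives $\enorm{\yv^*-\uv^*}=\sqrt{n}$. \autoref{lem:EuclToHamm} then yields $\psi(\uv^*)=\F_p(\yv^*-\uv^*)\in\widetilde{S}$.

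\emph{Surjectivity.} Let $\widetilde{\ev}\in\widetilde{S}$ and put $\ev^*=\Z(\widetilde{\ev})$ and $\uv^*=\yv^*-\ev^*$. By \autoref{lem:HammToEucl}, $\uv^*\in\LL^*$ and $\enorm{\yv^*-\uv^*}=\enorm{\ev^*}=\sqrt{n}$, so $\uv^*\in S^*$ by \autoref{lem:closestNorm}. Moreover $\psi(\uv^*)=\F_p(\Z(\widetilde{\ev}))=\widetilde{\ev}$.

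\emph{Injectivity.} This is the only step needing an extra argument, because $\F_p$ forgets multiples of $p$. Suppose $\uv^*_1,\uv^*_2\in S^*$ have the same image, and set $\ev^*_i=\yv^*-\uv^*_i$. Then $\ev^*_1-\ev^*_2\in p\Z^{zn}$, and it suffices to show $\ev^*_1=\ev^*_2$. The idea is to pin each $\ev^*_i$ down as an integer vector: we show $\ev^*_i=\Z(\psi(\uv^*_i))$, which forces $\ev^*_1=\ev^*_2$ since the images coincide.

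To prove $\ev^*_i=\Z(\psi(\uv^*_i))$, fix $i$. By the well-definedness step, $\F_p(\ev^*_i)$ is light-regular with exactly $n$ nonzero entries, each equal to $1$ mod $p$. Every nonzero integer entry of $\ev^*_i$ has absolute value at least $1$, and $\enorm{\ev^*_i}^2=n$. So $\ev^*_i$ has at most $n$ nonzero entries, all of absolute value $1$ if there are exactly $n$. On the other hand, each of the $n$ positions where $\F_p(\ev^*_i)$ is nonzero forces a nonzero integer entry. Hence $\ev^*_i$ has exactly $n$ nonzero entries, each in $\{\pm1\}$ and congruent to $1$ mod $p$. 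Assuming $p>2$ (implicit throughout the paper, e.g.\ $p=127$ for CROSS), each of these entries equals $1$. Therefore $\ev^*_i=\Z(\F_p(\ev^*_i))$, and so $\ev^*_1=\ev^*_2$, i.e.\ $\uv^*_1=\uv^*_2$.

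The main obstacle is exactly this lifting of the mod-$p$ equality to an integer equality. It is resolved by the norm bound $\sqrt{n}$ combined with light-regularity modulo $p$; if $p=2$ were allowed, the $-1$ entries would have to be excluded separately.
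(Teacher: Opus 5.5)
Your proof is correct under $p>2$, which you state explicitly. Well-definedness and surjectivity use exactly the paper's containments, $\psi(S^*)\subseteq\widetilde{S}$ and $\yv^*-\Z(\widetilde{\ev})\in S^*$. The paper packages the second as injectivity of $\widetilde{\ev}\mapsto\yv^*-\Z(\widetilde{\ev})$ plus finiteness; you exhibit the preimage directly.

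The genuine difference is injectivity, and here your route is the sound one. The paper writes $\yv^*-\uv_1^*=\yv^*-\uv_2^*+p\kv$ and asserts $\enorm{\yv^*-\uv_1^*}\ge\enorm{\yv^*-\uv_2^*}+p\enorm{\kv}$ ``by the triangle inequality''. That is the triangle inequality in the wrong direction. The valid reverse form only gives $p\enorm{\kv}\le 2\sqrt{n}$, which does not force $\kv=\Omat$. Your argument closes the step instead: the norm $\sqrt{n}$, together with the $n$ coordinates forced nonzero by the light-regular reduction, pins $\yv^*-\uv_i^*$ to a $\{0,\pm1\}$-vector with exactly that support, hence to the canonical lift $\Z(\psi(\uv_i^*))$.

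Your argument also shows that $p>2$ is a necessary hypothesis, not a convenience. For $p=2$, take $\uv^*\in S^*$ and add $2$ at any coordinate where $\yv^*-\uv^*$ equals $1$. The result lies in $\LL^*$ (since $2\Z^{zn}\subseteq\LL^*$), is at the same distance $\sqrt{n}$, and has the same image. So $\psi$ is not injective and the statement fails. Your closing remark that the $-1$ entries ``would have to be excluded separately'' therefore understates the issue: they cannot be excluded. The paper's triangle-inequality step would wrongly ``prove'' the $p=2$ case too.
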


\begin{proof}We first show that $\psi$ with the domain restrained to $S^*$ is injective.
    Let us suppose that there exists $\uv_1^*,\uv^*_2\in S^*$ distinct such that $\psi(\uv_1^*)=\psi(\uv_2^*)$, so $\F_p(\yv^*-\uv_1^*) = \F_p(\yv^*-\uv_2^*)$. This means that there exists a nonzero vector $\kv\in\Z^n$ such that $\yv^*-\uv_1^* = \yv^*-\uv_2^* + p\kv$. So by triangular inequality and \autoref{lem:closestNorm} we have
    \begin{align*}
        \enorm{\yv^*-\uv_1^*}\quad &\geq\quad \enorm{\yv^*-\uv_2^*} + p\enorm{\kv}\\
        \sqrt{n}\quad\quad &\geq\quad\quad \sqrt{n}\quad\quad+ p\enorm{\kv}
    \end{align*}
    This means that $\kv = \Omat$. So we obtain a contradiction proving the injectivity.
    Conversely, we obviously have $\psi^{-1}:\widetilde{\ev}\longrightarrow\yv^*-\Z\left(\widetilde{\ev}\right)$ which is injective.

    Because $S^*$ and $\widetilde{S}$ are finite sets, we now just need to show that $\psi(S^*)\subset\widetilde{S}$ and that $\psi^{-1}\left(\widetilde{S}\right)\subset S^*$. The first point is immediate given \autoref{lem:EuclToHamm} and \autoref{lem:closestNorm}. For the second one, let $\widetilde{\ev}\in\widetilde{S}$ and $\ev^* \eqdef \Z(\widetilde{\ev})$. By \autoref{lem:HammToEucl} we know that $\uv^* \eqdef \yv^*-\ev^* = \psi^{-1}\left(\widetilde{\ev}\right)$ belongs to $\LL^*$ and that $\enorm{\ev^*} = \sqrt{n}$. So by \autoref{lem:closestNorm} we have $\uv^*\in S^*$.\qed
\end{proof}
We now have everything necessary to build the reduction from \ResSD{} to~\CVP{}.

\begin{theorem}
\label{st:ressd-to-cvp}
    Let $(\Hm,\sv)\in\F_p^{(n-k)\times n}\times\F_p^{n-k}$ be an instance of \ResSD{} with solution entries in $E = \pset{r_1,\dots r_z}$, and consider $\left(\LL\left(\widetilde{\CC}\right),\yv^*\right)$ where $\widetilde{\CC}\eqdef\CC^\perp\left(\widetilde{\Hm}\right)$ and $\F_p(\yv^*)\transp{\widetilde{\Hm}} = \widetilde{\sv}$, as an instance of \CVP{}. If there exists an algorithm which can solve $\left(\LL\left(\widetilde{\CC}\right),\yv^*\right)$ in time $T$ with success probability $\rho$, then there exists an algorithm solving $(\Hm,\sv)$ in time $T + \mathcal{O}(n^3)$ with success probability $\rho$.
\end{theorem}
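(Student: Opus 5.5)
The reduction will be a chain: build the \CVP{} instance from $(\Hm,\sv)$, run the assumed \CVP{} solver on it, and pull its answer back to a \ResSD{} solution using the bijections of \autoref{prop:solBij} and \autoref{lem:ResRegBij}. The whole proof is therefore a composition of earlier results plus a cost count. The only new ingredient is the explicit construction of the lattice basis and of the target, which is where the $\mathcal{O}(n^3)$ term comes from.

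\textbf{Step 1 (instance construction).} From $(\Hm,\sv)$, form $\widetilde{\Hm}$ and $\widetilde{\sv}$ as in \autoref{def:regular}; this takes $\mathcal{O}(zn^2)$ operations, which is $\mathcal{O}(n^2)$ since $z$ is a constant.
\begin{itemize}
\item Compute $\widetilde{\yv}$ with $\widetilde{\yv}\transp{\widetilde{\Hm}}=\widetilde{\sv}$ by Gaussian elimination over $\F_p$. Concretely, bring $\widetilde{\Hm}$ to systematic form $(\Imat_{2n-k},\Mm)$ after a column permutation; a bi-regular permutation guarantees this is possible. Then set the information coordinates to zero.
\item The same elimination gives a generator matrix of $\widetilde{\CC}=\CC^\perp(\widetilde{\Hm})$ in systematic form $(\Imat,-\transp{\Mm})$, up to the permutation. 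From it we write down $\Am^{\widetilde{\CC}}$ as in \autoref{def:LatCode} and the remark after it.
\item Since $\widetilde{\Hm}$ has $2n-k$ rows and $zn$ columns, elimination costs $\mathcal{O}(n^3)$ field operations, with $z$ treated as a constant.
\item Set $\yv^*=\Z(\widetilde{\yv})$. Then $\F_p(\yv^*)\transp{\widetilde{\Hm}}=\widetilde{\sv}$, as the statement requires.
\end{itemize}

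\textbf{Step 2 (solving and pulling back).} Run the \CVP{} algorithm on $(\LL(\widetilde{\CC}),\yv^*)$. With probability $\rho$ it returns some $\uv^*\in S^*$.
\begin{itemize}
\item Assume $(\Hm,\sv)$ has a solution $\ev\in E^n$. Then \autoref{lem:ResRegBij} gives $\widetilde{\ev}\in\widetilde{S}$, so $\widetilde{S}\neq\emptyset$.
\item Hence \autoref{prop:solBij} applies: $\psi(\uv^*)=\F_p(\yv^*-\uv^*)$ lies in $\widetilde{S}$, i.e.\ it is a light-regular vector of weight $n$ solving the \RegSD{} instance.
\item Applying $\varphi^{-1}$ from \autoref{lem:ResRegBij} then recovers a solution of $(\Hm,\sv)$ in $E^n$. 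Concretely, in each block we read off the position $j$ of the single $1$ and output $r_j$.
\item Both pullbacks take linear time. The total is $T+\mathcal{O}(n^3)$, and the success probability is exactly that of the \CVP{} solver, namely $\rho$.
\end{itemize}

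\textbf{Main obstacle.} The only delicate point is whether a solution exists. If $(\Hm,\sv)$ has no solution in $E^n$, \autoref{lem:closestNorm} no longer pins down the \CVP{} distance, and nothing above can be claimed. So the argument should state that a solution is assumed to exist, as is implicit in a \ResSD{} instance, or else add a verification step. I would add the check: test that $\enorm{\yv^*-\uv^*}=\sqrt{n}$, or directly test the candidate against $\Hm$. This costs $\mathcal{O}(n^2)$ and keeps the algorithm correct in all cases.

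A secondary point is showing that a systematic form is always reachable. This follows from the bi-regular proposition, or simply from column pivoting, together with the observation that $\widetilde{\Hm}$ has full row rank $2n-k$ when $\Hm$ does. If $\widetilde{\Hm}$ is not full rank, we can drop redundant rows without changing the code.
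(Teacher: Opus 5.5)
Your proposal is correct and follows the paper's proof. You construct $(\widetilde{\Hm},\widetilde{\sv})$, obtain $\yv^*$ (and the lattice basis) by an $\mathcal{O}(n^3)$ linear solve, and pull the \CVP{} output back through $\psi$ and then $\varphi^{-1}$ (\autoref{prop:solBij}, \autoref{lem:ResRegBij}); making the assumption $\widetilde{S}\neq\emptyset$ explicit and using only the inclusion $\psi(S^*)\subseteq\widetilde{S}$ is exactly what is needed.

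One small correction: a bi-regular permutation alone does not guarantee a systematic form, since the columns of $\Hm$ for the blocks contributing two columns must be linearly independent. Rely on your column-pivoting argument together with the full-row-rank observation instead.
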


\begin{proof}
    Computing $\left(\widetilde{\Hm},\widetilde{\sv}\right)$ can be done in quadratic time, and computing $\yv^*$ needs to solve a linear system, which can be done in time $\mathcal{O}(n^3)$. Let us then consider $\varphi$ and $\psi$ the mappings as defined in \autoref{lem:ResRegBij} and \autoref{prop:solBij} respectively. We then have $\varphi\circ\psi^{-1}$ which is a bijection between the solutions of both of our problems, and computing a preimages of this mapping can be done in linear time.\qed
\end{proof}

\begin{remark}
    This reduction comes from the same ideas as those of \cite{EPRINT:HKNSC24}, where reductions are given from Lee-based problems to lattice-based problems. But in our case, the distributions of lattices are particular (not uniform).
\end{remark}


\subsection{Security analysis - Hybrid-BatchCVP attack}
\label{sec:security-batchcvp}

Combining the reduction with modern heuristic algorithms for Batch-CVP yields the following generic result.

\begin{corollary}
Let $(\Hm,\sv)\in\F_p^{(n-k)\times n}\times\F_p^{n-k}$ be an instance of \ResSD{} with solution entries in $E$ with $|E|=z$. Then, under \autoref{heur:cvp-cost}, it can be solved in time $2^{0.234z(n-g)+o(zn)}z^g$ and memory $2^{0.208z(n-g)+o(zn)}$, where $g = \lceil 
0.058zn\log_z{2}\rceil$.
\end{corollary}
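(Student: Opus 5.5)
We will prove this by a hybrid guess-and-batch argument: guess the values of $g$ error coordinates, which turns the instance into $z^g$ smaller instances, and then solve all resulting \CVP{} instances as one batch under \autoref{heur:cvp-cost}.

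\emph{Step 1: guessing.} Fix $g$ coordinates of $\ev$, say the last $g$. For each guess $\mathbf{c}\in E^g$ of these coordinates, write $\Hm=(\Hm_a,\Hm_b)$ with $\Hm_b$ the last $g$ columns, and set $\sv_{\mathbf{c}} \eqdef \sv-\mathbf{c}\transp{\Hm_b}$. This gives a \ResSD{} instance $(\Hm_a,\sv_{\mathbf{c}})$ of length $n-g$ over the same set $E$. The parity-check matrix $\Hm_a$ is the same for every guess; only the syndrome depends on $\mathbf{c}$. Exactly one guess (the correct one) yields an instance with a solution in $E^{n-g}$, and from that solution we recover $\ev$ directly.

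\emph{Step 2: one lattice, many targets.} Apply the construction of \autoref{def:regular} and \autoref{st:ressd-to-cvp} to each instance $(\Hm_a,\sv_{\mathbf{c}})$.
\begin{itemize}
\item The matrix $\widetilde{\Hm_a}$, and hence the lattice $\LL(\widetilde{\CC})$ of dimension $z(n-g)$, is independent of $\mathbf{c}$.
\item Only the target $\yv^*_{\mathbf{c}}$ changes. It is obtained from one precomputed particular solution of the linear system plus a linear correction in $\mathbf{c}$, so each target costs polynomial time.
\end{itemize}
We therefore obtain $M=z^g$ \CVP{} instances over a single lattice of dimension $n'=z(n-g)$. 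By \autoref{prop:solBij} and \autoref{lem:closestNorm}, the correct guess gives a target at distance exactly $\sqrt{n-g}$ from the lattice. Its \CVP{} solution maps, via $\varphi\circ\psi^{-1}$, to the correct \ResSD{} solution. For a wrong guess, any returned vector is simply rejected by checking membership in $E^{n-g}$ and the parity-check equation, at polynomial cost.

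\emph{Step 3: batch condition and parameter choice.} \autoref{heur:cvp-cost} requires $M\ge 2^{0.058n'+o(n')}$, that is,
\[
g\log_2 z\ \ge\ 0.058\,z(n-g).
\]
The stated $g=\lceil 0.058zn\log_z 2\rceil$ gives $g\log_2 z\ge 0.058zn\ge 0.058z(n-g)$, so the condition holds, with the $o(n')$ slack absorbed in the exponent's $o(zn)$. The heuristic then yields:
\begin{itemize}
\item time $M\cdot 2^{0.234n'+o(n')}=z^g\,2^{0.234z(n-g)+o(zn)}$;
\item memory $2^{0.208z(n-g)+o(zn)}$.
\end{itemize}
The polynomial overheads (computing $\widetilde{\Hm_a}$, the targets, and the final conversions and checks) are absorbed in the $o(zn)$ terms.

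The main obstacle is making sure the batch heuristic applies cleanly. We must confirm that the batch is over a single fixed lattice, and that the wrong-guess targets, which may lie farther from the lattice, do not invalidate the per-target cost. We treat them as ordinary \CVP{} targets whose outputs are discarded, which is how \autoref{heur:cvp-cost} is phrased.
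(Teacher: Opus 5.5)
Your proposal is correct and follows the paper's proof: guess $g$ coordinates (the paper phrases it as guessing $g$ blocks of the expanded lattice and notes this is equivalent to guessing on the initial \ResSD{} instance), which yields $M=z^g$ targets for a single lattice of dimension $z(n-g)$, and then verify $M\ge 2^{0.058zn}\ge 2^{0.058z(n-g)}$ to invoke the Batch-\CVP{} heuristic. One minor point: the claim that exactly one guess admits a solution only holds if the original solution is unique, but this is harmless, since any solution recovered from any guess is a valid solution of $(\Hm,\sv)$.
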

\begin{proof}
Let us guess $g$ first blocks of $z$ variables each in the final lattice ($z^g$ candidates) and modify the target vector each time accordingly. We obtain a batch of $M=z^g$ CVP target vectors for a single lattice of reduced dimension $z(n-g)$ (the guessing can be equivalently done on the initial \ResSD{} problem, so that the reduction proof remains valid for the smaller problem). Note that $M \ge 2^{0.058zn} \ge 2^{0.058z(n-g)}$ and therefore \autoref{heur:cvp-cost} is applicable.
\qed
\end{proof}
\begin{remark}
Since the number of guesses is chosen such that $2^{0.234zn}z^g \approx 2^{0.292zn}$, the reduction from the use of the Batch-\CVP{} technique (compared to basic sieving cost for \CVP{}) comes from the reduction of the lattice dimension, and is equal to $2^{0.234zg}\approx 2^{0.0014z^2 n \log_z{2}}$.
\end{remark}
\begin{table}[htb!]
\centering
\setlength{\tabcolsep}{0.5em}
\begin{tabular}{cc|p{1.2cm}p{1.5cm}p{1.7cm}|p{1.2cm}l}
\toprule
$(n,k)$ & $z'$ & Trunc. prob. & $g$ blocks guessed & Batch-\CVP{} cost & Total time & Memory \\
\toprule
\multirow{7}{*}{(127, 76)} 
& 7 & $2^{-0.0}$ & 19 ($2^{53.3}$)& $2^{230.2}$ & $2^{231.5}$& $2^{157.2}$ \\
& 6 & $2^{-28.2}$ & 18 ($2^{46.5}$)& $2^{199.6}$ & $2^{229.1}$& $2^{136.0}$ \\
& 5 & $2^{-61.6}$ & 16 ($2^{37.2}$)& $2^{167.0}$ & $2^{229.9}$& $2^{115.4}$ \\
& 4 & $2^{-102.5}$ & 15 ($2^{30.0}$)& $2^{134.8}$ & $2^{238.6}$& $2^{93.2}$ \\
& 3 & $2^{-155.2}$ & 14 ($2^{22.2}$)& $2^{101.5}$ & $2^{258.0}$& $2^{70.5}$ \\
& 2 & $2^{-229.5}$ & 15 ($2^{15.0}$)& $2^{67.4}$ & $2^{298.2}$& $2^{46.6}$ \\
\midrule
\multirow{7}{*}{(187, 111)} 
& 7 & $2^{-0.0}$ & 28 ($2^{78.6}$)& $2^{339.0}$ & $2^{340.3}$& $2^{231.5}$ \\
& 6 & $2^{-41.6}$ & 26 ($2^{67.2}$)& $2^{293.3}$ & $2^{336.1}$& $2^{200.9}$ \\
& 5 & $2^{-90.8}$ & 24 ($2^{55.7}$)& $2^{246.4}$ & $2^{338.5}$& $2^{169.5}$ \\
& 4 & $2^{-151.0}$ & 22 ($2^{44.0}$)& $2^{198.4}$ & $2^{350.7}$& $2^{137.3}$ \\
& 3 & $2^{-228.6}$ & 21 ($2^{33.3}$)& $2^{149.8}$ & $2^{379.7}$& $2^{103.6}$ \\
& 2 & $2^{-338.0}$ & 22 ($2^{22.0}$)& $2^{99.2}$ & $2^{438.5}$& $2^{68.6}$ \\
\midrule
\multirow{7}{*}{(251, 150)} 
& 7 & $2^{-0.0}$ & 37 ($2^{103.9}$)& $2^{454.4}$ & $2^{455.7}$& $2^{311.6}$ \\
& 6 & $2^{-55.8}$ & 34 ($2^{87.9}$)& $2^{392.6}$ & $2^{449.6}$& $2^{270.8}$ \\
& 5 & $2^{-121.8}$ & 32 ($2^{74.3}$)& $2^{330.5}$ & $2^{453.6}$& $2^{227.8}$ \\
& 4 & $2^{-202.6}$ & 30 ($2^{60.0}$)& $2^{266.9}$ & $2^{470.8}$& $2^{183.9}$ \\
& 3 & $2^{-306.8}$ & 28 ($2^{44.4}$)& $2^{200.9}$ & $2^{509.0}$& $2^{139.2}$ \\
& 2 & $2^{-453.6}$ & 30 ($2^{30.0}$)& $2^{133.4}$ & $2^{588.3}$& $2^{91.9}$ \\
%
%
%
%
\bottomrule
\end{tabular}
\vspace{0.25em}
\caption{Hybrid-Batch-\CVP{} cost estimation for CROSS-R-SDP. The attack consists in probabilistic truncation from the initial $z=7$ to $z'$, guessing $g$ coordinate blocks, and running heuristic Batch-\CVP. The complexities are adapted to achieve success rate of 90\%. Up to the correction factor, the total time complexity is the product of the inverse of the truncation probability and the batch-\CVP{} cost.}
\label{tab:cvp-cost}
\end{table}
We consider the application of the hybrid attack - multiplicative truncation (see \autoref{st:mult-truncation}), \CVP{} reduction, partial guessing, and Batch-\CVP{} solution - to CROSS-R-SDP parameters and their reduced variants. The results are summarized in \autoref{tab:cvp-cost}. Interestingly, the best time complexity is achieved at truncation $z'=6$ (out of the original $z=7$); it is roughly equal to $2^{1.8n}$. Choosing $z'=5$ leads to a small slowdown, but significant savings in memory complexity.

However, we note that,  at least for the CROSS-R-SDP parameters,
the proposed attack complexities do not improve over basic memoryless ISD-like guessing of $k$ coordinates in the initial problem \ResSD{}. This attack has time complexities $2^{213.4}, 2^{311.6}, 2^{421.1}$ for $n=127,187$, and $251$ respectively. The main contribution to this fact is that we used generic CVP solvers whose complexity estimate depends only on the dimension of the lattice, which is quite large in our case ($zn$). Possible analysis based on the length of the solution vector is made difficult due to the existence of a large number of short vectors in the lattice. For example, for the restriction used in CROSS, any vector of the form $(\ldots,0,2,-3,1,0,\ldots)$ belongs to the lattice, where the part $(2,-3,1)$ is inside any block. Developing and analyzing tailored lattice algorithms is an interesting avenue for future work.

\TODO{Some small scale experiments. Maybe not priority, since even for $n=50$ and small $z'$ the cost is high. (large dimension). Still confirming that it all makes sense would be good.}


\section{Lattice-based Modelings: Reduction of ResSD to List-CVP and List-SVP without expansion}
\label{sec:list-cvp}

We also propose another heuristic, but more direct and more compact lattice-based modeling of \ResSD{}. The key idea is that the set $E$ of allowed values can be modeled by a set of ``small'' values around a well-chosen center. Furthermore, the values can be scaled by a nonzero constant which increases the number of such embeddings. The effect is stronger when the set is small or has a certain structure. For this purpose, we will 
use the multiplicative truncation (\autoref{st:mult-truncation}) to reduce the size of the set~$E$.

We start with the notion of \emph{an affine diameter} in \autoref{sec:affine-diameter}, which quantifies how close the values of the set $E$ can be packed together under an affine substitution. In \autoref{sec:listcvp-general-reduction}, we describe the general shape of the reduction with an arbitrary center $\mu$ and a radius $R$. Then, in \autoref{sec:listcvp-aff-diam}, we analyze the specialization of the reduction based on the affine diameter, which leads to a concrete deterministic reduction to \ListCVP. In \autoref{sec:listcvp-mean-median}, we continue with a stronger but probabilistic reduction to \ListCVP{} based on the standard deviation of the set $E$ and the mean of $E$ as the center. We show in \autoref{sec:listcvp-svp} that for an integral center the problem can be converted to \ListSVP{} by manipulating the \ResSD{} instance before the reduction. Remarkably, for some \ResSD{} instances (low code rate and compact restriction set),
one can heuristically deduce reductions to pure \CVP{} and \SVP{}. Finally, in \autoref{sec:security-listcvp} we conclude with applying the reductions to CROSS-R-SDP instances and verifying them experimentally.


\subsection{Affine diameter}
\label{sec:affine-diameter}

\begin{definition}
Let $E \subseteq \F_p$, $|E|=z>0$. We define the \emph{affine diameter} $D_E$ of the set as 
\begin{equation}
\label{eq:ad-minmax}
\min_{a,b\in\F_p, a\ne 0}
~\max_{x \in E}
~\Z(ax+b).
\end{equation}
\end{definition}

\begin{remark}
For computational purposes, a small optimization is to use the equivalent expression \[
D_E = \min_{a\in\F_p\setminus\{0\}, x_0 \in E}
~\max_{x \in E}
~\Z(ax-ax_0),
\]
which reduces the number of $b$ to be tested from $p$ to $|E|$. To see that it is true, observe that, for a fixed $a$, the minimum in \eqref{eq:ad-minmax} is achieved when one of the elements of $E$ is mapped to 0. Otherwise, one can subtract the smallest integer from $\Z(ax+b)$ without wrapping modulo $p$, decreasing the maximum value by that amount. Therefore, it is sufficient to enumerate all nonzero $a$ and an element $x_0 \in E$ to be mapped to 0, and choose the minimum value of the sets' maximums.
\end{remark}
It is easy to see that $D_E$ is the smallest length of a continuous segment that contains all elements of $aE+b$. We will show how to use a small affine diameter to convert \ResSD{} into a lattice-based problem. Before that, we will study the notion itself further.

\medskip
The case of $z=1$ is trivial with $D_E=0$. 
Generally, any two elements of $E$ may always be mapped to 0 and 1 by a valid choice of $a,b$ in the definition. In the case of $z=2$, this implies $D_E=1$, which essentially leads to a subset-sum formulation of \ResSD, as already noted by the designers of CROSS \cite{NISTPQC-ADD-R2:CROSS24}.

The larger cases of $z > 2$ become nontrivial. We study some examples computationally for $p=127$ used in CROSS.
Note that the set $E$ used in CROSS has subsets of size $z'$ with $1 \le z' \le 7$, of affine diameter $2^{z'-1}-1$, namely $\{1,2,4,\ldots,2^{z'-1}\} \subseteq E$. We will use this together with the truncation technique. The results are summarized in \autoref{tab:cross-ad}.

For example, in the case of $z=3$ with $p=127$ as in CROSS, without loss of generality, assume the shape $E=\{0,1,a\}$. The average affine diameter (over $a \in \pset{2,\ldots,p-1} \subseteq \F_p$) is $8.032$. The highest value of $D_E$ is 13 achieved with $\pset{0,1,20}$, which is affine-equivalent to $\pset{0,6,13}$. The other possible values of $D_E$ are all integers from 2 to 11 and are achieved (for example) by the sets $\pset{0,1,D_E)}$.

\begin{table}[htb]
\setlength{\tabcolsep}{0.5em}
\centering
\begin{tabular}{c|ccccc}
\toprule
 & \multicolumn{5}{c}{Affine diameter $D_E$} \\
$z=|E|$ & CROSS & Min. & Avg. & Max. & Example (Max.) \\
\midrule
2 & 1 & 1 & 1 & 1 & $\pset{0,1}$\\
3 & 3 & 2 & 8.03 & 13 & $\pset{0,6,13}$\\
4 & 7 & 3 & 17.81 & 27 & $\pset{0,11,12,27}$\\
5 & 15 & 4 & 27.47 & 42 & $\pset{0, 12, 14, 17, 42}$\\
6 & 31 & 5 & 36.09 & 57 & $\pset{0, 12, 13, 16, 23, 57}$ \\
7 & 63 & 6 & 43.56 & 65 & $\pset{0, 22, 35, 40, 54, 63, 65}$ \\
8 & - & 7 & $50.01^{\dagger}$ & $72^{\dagger}$ & $\pset{0, 16, 27, 31, 32, 35, 38, 72}$ \\
9 & - & 8 & $55.60^{\dagger}$ & $79^{\dagger}$ & $\pset{0, 1, 2, 4, 7, 13, 22, 46, 79}$\\
10 & - & 9 & $60.42^{\dagger}$ & $83^{\dagger}$ & $\pset{0, 18, 28, 29, 36, 56, 59, 71, 82, 83}$\\
\bottomrule
\end{tabular}
\vspace{0.25em}
\caption{Comparison of the affine diameter of subsets of $E$ used in CROSS-R-SDP ($p=127$) to average/maximum values over subsets of $\F_p$ of size $z$.
\\
$\dagger$: approximate values obtained by sampling at least $10^7$ random subsets of size $z$.
}
\label{tab:cross-ad}
\end{table}

We conclude that the affine diameter is a nontrivial parameter with a broad range of values. We observed that the values for the subsets of the set $E$ used in CROSS are on the smaller side, showing suboptimal resistance. Together with the truncation technique, we will use it to derive new direct lattice-based models of the \ResSD{} and CROSS in particular.

\subsection{Reduction to List-CVP}
\label{sec:listcvp-general-reduction}
The following theorem shows the close relation between solution vectors to a (restricted) syndrome decoding problem and the lattice associated to the corresponding linear code (as defined in \autoref{sec:cvp}). One has to keep \autoref{st:affine-change} (Affine Shifting) in mind, since it allows us to change the set $E$ which can lead to different behavior of the Euclidean distances.

\begin{theorem}\label{st:ressd-to-lattice}
Let $(\Hm,\sv)\in\F_p^{(n-k)\times n}\times\F_p^{n-k}$ be an instance of \ResSD{}
with solution entries in a set $E$.
Let positive $\mu \in \R$. Then, there exists an $n$-dimensional lattice $\LL \subseteq \Z^n$ of volume $p^{n-k}$ and a vector $\vv \in \R^n$, constructible in time polynomial in $n$, such that there exists an embedding $\varphi$ of the solutions $\ev \in \F_p^n$, $\ev \Hm^T=\sv$, into the lattice, such that
\begin{equation}
\label{eq:code-lattice-distance}
\enorm{\varphi(\ev) - \vv} =
\sqrt{
\sum_{i=1}^n \pround{\Z(e_i) - \mu}^2
}.
\end{equation}
This embedding and its left inverse are computable in time $\mathcal{O}(n)$.
\end{theorem}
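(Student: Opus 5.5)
The lattice will be $\LL \eqdef \LL(\CC)$ with $\CC \eqdef \CC^\perp(\Hm)$, the code of dimension $k$ defined by the parity-check matrix. \autoref{def:LatCode} (with the permutation remark, if needed) constructs $\LL(\CC)$ in polynomial time from a systematic generator matrix of $\CC$, which is obtained from $\Hm$ by Gaussian elimination. By \autoref{prop:vol}, its volume is $p^{n-k}$.

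For the target, we first compute by linear algebra a particular solution $\yv\in\F_p^n$ with $\yv\transp{\Hm}=\sv$, and set
\[
\vv \eqdef \Z(\yv) - \mu\,\uv,
\]
where $\uv=(1,\dots,1)\in\R^n$. The embedding of a solution $\ev$ is
\[
\varphi(\ev)\eqdef \Z(\yv)-\Z(\ev).
\]

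We then check that $\varphi(\ev)$ lies in $\LL$. Indeed, $\F_p(\varphi(\ev)) = \yv-\ev$, and $(\yv-\ev)\transp{\Hm}=\sv-\sv=\Omat$, so $\yv-\ev\in\CC$. We need $\LL(\CC)=\Z(\CC)+p\Z^n$, i.e. every integer vector whose reduction mod $p$ lies in $\CC$ belongs to $\LL(\CC)$. This follows from the shape of $\Am^\CC$. Given $\wv$ with $\F_p(\wv)\in\CC$, subtract $\xv\Am^\CC$, where $\xv$ is the first $k$ coordinates of $\wv$ padded with zeros. The result vanishes in its first $k$ coordinates and still reduces to a codeword; since $\Gm$ is systematic, that codeword is $\Omat$. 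Hence the remaining coordinates are multiples of $p$ and are absorbed by the $p\Imat_{n-k}$ rows. This is the same fact as the surjectivity in \autoref{prop:lcbij}, stated in the direction we need. It is the only step requiring care and is the main obstacle, modest as it is.

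The distance computation is then direct:
\[
\varphi(\ev)-\vv = -\Z(\ev)+\mu\uv,
\]
whose $i$-th coordinate is $\mu-\Z(e_i)$. Taking the Euclidean norm gives exactly \eqref{eq:code-lattice-distance}.

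Finally, $\varphi$ costs $\mathcal{O}(n)$ once $\yv$ is precomputed. A left inverse is $\wv\mapsto \F_p(\Z(\yv)-\wv)$: applied to $\varphi(\ev)$ it returns $\F_p(\Z(\ev))=\ev$, also in $\mathcal{O}(n)$. Polynomial-time constructibility of $\LL$ and $\vv$ follows from the $\mathcal{O}(n^3)$ Gaussian elimination.
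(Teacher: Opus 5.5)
Your proof is correct and follows the paper's construction: the same lattice $\LL(\CC)$ with $\CC=\CC^\perp(\Hm)$, a particular solution of $\yv\transp{\Hm}=\sv$ used as a shift, and $\vec{\mu}$ built into the target. Your $\varphi$ and $\vv$ differ from the paper's only by a global sign, which is harmless since $\LL$ is symmetric. Your explicit check that every integer lift of a codeword lies in $\LL(\CC)$ is a welcome addition: the paper invokes \autoref{prop:lcbij} for exactly this direction, although that proposition is stated as surjectivity from $\LL(\CC)$ onto $\CC$.
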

\begin{proof}
We first construct $\LL, \yv$ and $\varphi$.
Let $\CC$ be the linear code that admits $\Hm$ as a parity check matrix.
Let $\av \in \F_p^n$ be an arbitrary solution to $\av \Hm^T = \sv$.
Define 
\begin{align}
\LL &= \LL(\CC) ~~ \subseteq \Z^n,\\
\vv &= \vec{\mu} - \Z(\av) ~~ \in \R^n,\\
\varphi: \ev &\mapsto \Z(\ev) - \Z(\av) ~~ \in \Z^n,
\end{align}
where $\vec{\mu}=(\mu,\ldots,\mu) \in \R^n$.

We now prove the claims. 
Clearly, $\LL$ and $\vv$ are constructible efficiently.
We now prove that $\varphi$ produces a vector in $\LL$.
Let $\ev \in \F_p^n$ be a solution to the SD problem: $\ev \Hm^T = \sv$.
Then, \[
\F_p(\varphi(\ev))\Hm^T = (\ev - \av)\Hm^T = 0,
\]
and so $\F_p(\varphi(\ev))$ belongs to $\CC$, which by \autoref{prop:lcbij} implies that $\varphi(\ev)$ belongs to $\LL$.
Furthermore, we have \[
\varphi(\ev) - \vv = \Z(\ev) - \Z(\av) - \vec{\mu} + \Z(\av) = \Z(\ev) - \vec{\mu}
\]
which implies \eqref{eq:code-lattice-distance}.
Finally, the map $\psi: \yv \mapsto \F_p(\yv + \av)$ is a left inverse of $\varphi$ and it is clear that both $\phi$ and $\psi$ can be computed in linear time.\qed
\end{proof}
As a corollary, we derive the generic reduction of \ResSD{} to \ListCVP, which is parametrized by the ``restriction center'' $\mu$ which should be close to all elements of $E$, and the radius $R$ bounding the distance from this center. 

\begin{corollary}\label{st:list-cvp-reduction}
Let $(\Hm,\sv)\in\F_p^{(n-k)\times n}\times\F_p^{n-k}$ be an instance of \ResSD{} with solution entries in a set $E$.
Let positive $\mu,R \in \R, \mu \ge 0, R > 0$, and define
\begin{equation}
\label{eq:listcvp-success}
\rho = \Prob_{\substack{e_i \in \Z(E)\\1 \le i \le n}}\pset{\sum_{i=1}^n (e_i-\mu)^2 \le R^2}.
\end{equation}
Let a lattice $\LL \subseteq \Z^n$ and a vector $\vv \in \R^n$ be defined as in \autoref{st:ressd-to-lattice}.
Then, with probability $\rho$ over solutions to the \ResSD{} problem,
a solution to the \ListCVP{} instance $(\LL,\vv)$ with distance upper bound $R$ and $N$ output vectors
can be converted into a solution to the \ResSD{} instance in time $\mathcal{O}(nN)$. 
\end{corollary}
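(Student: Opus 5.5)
The argument follows \autoref{st:ressd-to-lattice} directly, with \autoref{prop:lcbij} supplying the key structural fact.

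First fix a solution $\ev \in E^n$ of the \ResSD{} instance, with $\ev\Hm^T=\sv$. The probability $\rho$ in \eqref{eq:listcvp-success} is over the entries of this solution regarded as uniform in $\Z(E)$. Alternatively, for CROSS-type sets, the solution can be randomized by \autoref{st:mult-randomization}. In either reading, the event $\sum_i (\Z(e_i)-\mu)^2 \le R^2$ holds with probability $\rho$.

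Next apply \autoref{st:ressd-to-lattice} to obtain $\LL$, $\vv$ and the embedding $\varphi$. By \eqref{eq:code-lattice-distance}, whenever that event holds, $\varphi(\ev)\in\LL$ satisfies $\enorm{\varphi(\ev)-\vv}\le R$. Hence $\varphi(\ev)$ lies in $\LL\cap(\vv+\Ball{n,R})$, which is precisely the output list of the \ListCVP{} instance $(\LL,\vv)$ with bound $R$.

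The conversion algorithm then processes each of the $N$ listed vectors $\yv$ in turn:
\begin{enumerate}
\item compute $\psi(\yv)=\F_p(\yv+\av)$, the left inverse from the proof of \autoref{st:ressd-to-lattice}, in time $\mathcal{O}(n)$;
\item check that every coordinate of $\psi(\yv)$ lies in $E$, which also takes $\mathcal{O}(n)$ with a lookup table for $E$;
\item output $\psi(\yv)$ if it passes.
\end{enumerate}
The total cost is $\mathcal{O}(nN)$.

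The main point needing care is correctness of the check, specifically that it does not require recomputing the syndrome. Every listed vector lies in $\LL=\LL(\CC)$, so by \autoref{prop:lcbij} its reduction $\F_p(\yv)$ lies in $\CC$. Therefore $\psi(\yv)\Hm^T=\F_p(\yv)\Hm^T+\av\Hm^T=\sv$ automatically. This is what keeps the per-vector cost at $\mathcal{O}(n)$ rather than $\mathcal{O}(n^2)$.

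So any listed vector passing the membership test in $E^n$ is a genuine \ResSD{} solution. On the event of probability $\rho$, at least one such vector is present, namely $\varphi(\ev)$, since $\psi(\varphi(\ev))=\ev\in E^n$.
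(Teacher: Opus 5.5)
Your proof is correct and follows the same route as the paper. You embed the solution via \autoref{st:ressd-to-lattice}, observe that the event of probability $\rho$ places $\varphi(\ev)$ in the \ListCVP{} output, and recover it by adding $\Z(\av)$, reducing mod $p$, and testing membership in $E^n$. Your additional remark that the syndrome equation holds automatically for every listed vector (via \autoref{prop:lcbij}) simply makes explicit why the paper's per-candidate cost is linear.
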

\begin{proof}
From \autoref{st:ressd-to-lattice} it is clear that solutions $\ev$ satisfying the bound in \eqref{eq:listcvp-success} belong to the constructed \ListCVP{} instance within radius $R$. Recovering the \ResSD{} solution can be done in linear time per candidate vector, since it only requires addition of a constant vector and testing the restriction to $E$.
\qed
\end{proof}

\if10  
\begin{remark}
In fact, a complete solution of \ListCVP{} is not necessary. A solution that misses a random fraction $\epsilon$ of the vectors would simply decrease the probability of success by a factor $(1-\epsilon)$.
\end{remark}
\fi

\begin{remark}
\label{rem:pure-cvp}
Under the Gaussian Heuristic and using \autoref{prop:vol}, we note that when $R \le \sqrt{\frac{n}{2\pi e}} \Vol^{1/n}= \sqrt{\frac{n}{2\pi e}} p^{1-k/n}$, the  number of vectors in the lattice satisfying the distance bound $R$ is a small constant. In this case, one can heuristically reclassify the \ListCVP{} formulation as a simple \CVP{}. In contrast, \autoref{st:ressd-to-cvp} is a non-heuristic precise reduction to \CVP.
\end{remark}

\begin{remark}
Note that the success probability is taken over the instances of the initial problem, and so is essentially a weak-key attack when $\rho < 1$.
For a fixed $\mu$, by varying the distance upper bound $R$ we can explore different trade-off points between the \ListCVP{} complexity and the success probability. However, reaching a very high success probability would be extremely costly. Luckily, we have an alternative in the case where $E$ is a multiplicative subgroup, as in CROSS. 
The natural randomization method (\autoref{st:mult-randomization}) allows the reduction to be performed multiple times, each time having a fresh success probability $\rho$. Here, one needs to note that the reduction probability in the theorem is over the possible error-vector solutions, and the randomization technique perfectly randomizes these vectors.
\end{remark}


\subsection{Analysis based on affine diameter}
\label{sec:listcvp-aff-diam}
A possible choice for $\mu$ and $R$ is naturally based on the affine diameter. The following proposition describes a precise compact reduction based on this quantity.

\begin{proposition}
Let $(\Hm,\sv)\in\F_p^{(n-k)\times n}\times\F_p^{n-k}$ be an instance of \ResSD{} with solution entries in a set $E$. Then, in time polynomial in $n$ and linear in the output size, this problem can be reduced to a \ListCVP{} problem over an $n$-dimensional lattice of volume $p^{n-k}$ and distance $\sqrt{n}D_E/2$.
\end{proposition}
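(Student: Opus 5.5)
The plan is to combine Affine Shifting (\autoref{st:affine-change}) with the generic reduction of \autoref{st:list-cvp-reduction}, choosing the center $\mu$ at the midpoint of the shifted restriction set.

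First, compute the affine diameter by brute force using the optimized expression from the remark: enumerate $a\in\F_p\setminus\{0\}$ and $x_0\in E$, and record the pair $(a,b)$ with $b=-ax_0$ attaining the minimum. This takes $\mathcal{O}(p z^2)$ time. Note that $p$ is a fixed field parameter; if one insists on polynomial time in $n$ alone, observe that $p\le$ the input size, since $\Hm$ has entries in $\F_p$. Set $E'=aE+b$. By the minimality argument in the remark, $\Z(E')\subseteq\{0,1,\dots,D_E\}$. \autoref{st:affine-change} then produces an instance $(\Hm',\sv')$ over $E'$ in time $\mathcal{O}(n^2)$, and any solution of it maps back to a solution of $(\Hm,\sv)$ in linear time.

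Next, apply \autoref{st:ressd-to-lattice} to $(\Hm',\sv')$ with $\mu=D_E/2$. This gives an $n$-dimensional lattice $\LL=\LL(\CC^\perp(\Hm'))$ of volume $p^{n-k}$ by \autoref{prop:vol}; if $\Hm$ has full rank, so does $\Hm'=a^{-1}\Hm$. It also gives a target $\vv$. For every $e_i\in\Z(E')$ we have $|e_i-D_E/2|\le D_E/2$, so every solution $\ev\in E'^n$ satisfies
\[
\sum_{i=1}^n\pround{\Z(e_i)-\mu}^2\le n D_E^2/4 .
\]
Hence the probability $\rho$ in \autoref{st:list-cvp-reduction} equals $1$ for $R=\sqrt{n}D_E/2$, and the reduction is deterministic.

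Finally, take the \ListCVP{} output of size $N$. For each candidate, apply the left inverse $\psi$ from \autoref{st:ressd-to-lattice} and test membership in $E'^n$; this costs $\mathcal{O}(nN)$, which is linear in the output size. Map the first valid candidate back through the affine change. Every vector that passes the test is a genuine solution, because $\psi(\yv)\Hm'^T=\sv'$ holds automatically for $\yv\in\LL$.

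The only delicate point is checking that the midpoint center gives a uniform per-coordinate bound of $D_E/2$. This needs all shifted values to be integers in $[0,D_E]$ with no wraparound modulo $p$, and that holds exactly because $D_E$ is defined as a maximum of $\Z(ax+b)$ over representatives in $\{0,\dots,p-1\}$. The remaining steps are direct invocations of earlier results.
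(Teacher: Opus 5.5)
Your proposal is correct and follows the paper's proof: affine-shift so that $\Z(E')\subseteq\{0,\dots,D_E\}$, apply \autoref{st:list-cvp-reduction} with $\mu=D_E/2$ and $R=\sqrt{n}D_E/2$ to get $\rho=1$, and take the volume $p^{n-k}$ from \autoref{prop:vol}. One small inaccuracy is in your extra remark on computing $D_E$: the input has $\mathcal{O}(n^2\log p)$ bits, so $p$ is not bounded by the input size; this is harmless, since $E$ is a predetermined set and the optimal pair $(a,b)$ can be treated as precomputed, which is what the paper's proof implicitly assumes.
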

\begin{proof}
We first apply \autoref{st:affine-change} to obtain an equivalent \ResSD{} instance with $E'=\pset{0,\ldots,D_E}$ and then apply \autoref{st:list-cvp-reduction} with $\mu=D_E/2$ and $R=\sqrt{n}D_E/2$, so that \[
\sum_{i=1}^n (e_i - \mu)^2 \le 
\sum_{i=1}^n \pround{\frac{D_E}{2}}^2 \le \frac{nD_E^2}{4},
\]
implies the success probability $\rho=1$ exactly. The lattice from \autoref{st:list-cvp-reduction} (and \autoref{st:ressd-to-lattice}) has volume $p^{n-k}$ by \autoref{prop:vol}.
\qed
\end{proof}
An interesting question one may ask is for which \ResSD{} parameters $n,k,p,D_E$ the problem reduces to pure \CVP{} (which would be solvable in heuristic time $2^{0.292n+o(n)}$ by \autoref{heur:cvp-cost})? The following analysis is under the GH.

The expected length of the shortest nonzero vector (or distance from a random point) is $\sqrt{\frac{n}{2\pi e}}\Vol^{1/n}=\sqrt{\frac{n}{2\pi e}} p^{\frac{n-k}{n}}$. When this quantity is larger than $R=\sqrt{n} D_E/2$, we can expect a unique solution coinciding with the image of the solution to the \ResSD{} problem under the reduction. Therefore, we need \[
D_E < \sqrt{\frac{2}{\pi e}} p^{1-k/n} \approx 0.48 p^{1-k/n}.
\]
This bound benefits from smaller $D_E$, larger $p$, and lower code rate.

Regarding the overall sparsity, a natural requirement is to expect roughly 1 random solution to the problem, which means $z^n \approx p^{n-k}$, $z=|E|$ (see \cite{NISTPQC-ADD-R2:CROSS24}), and thus the bound becomes $
D_E < 0.48 z$.
Clearly, this is not possible for $z \ge 3$. Therefore, this reduction with $(\mu,R)$ based on the affine diameter never reduces to pure \CVP{} in interesting cases. However, the reduction may still be useful for two reasons. First, we can apply multiplicative truncation (\autoref{st:mult-truncation}), which makes the problem sparser by eliminating some elements of $E$. The combination of truncation and reduction may thus be more powerful. Second, the \ListCVP{} formulation can still lead to efficient attacks, since the enumeration complexity gradually increases from sieving $2^{0.292n+o(n)}$, with increasing \ListCVP{} radius. In the next subsection, we will evaluate these options, but together with better choices of $\mu$ and $R$.


\subsection{Analysis based on the standard deviation of E}
\label{sec:listcvp-mean-median}

Although setting $\mu,R$ in the reduction based on the affine diameter $D_E$ is a natural option and leads to good theoretical bounds and deterministic success ($\rho=1$), stronger attacks may be achieved when $\rho < 1$ is allowed. 

Another natural choice of $\mu$ is the one that minimizes the expected value of the distance (where the expectation is over $\ev \in E^n$) \[
\enorm{\varphi(\ev) - \vv} =
\sqrt{\sum_{i=1}^n \pround{\Z(e_i) - \mu}^2}.
\]
Since it is not straightforward to compute, 
we will first minimize the expected square of the distance instead.

\begin{proposition}
\label{st:mean-squared}
The expected value of the sum $\sum_{i=1}^n \pround{\Z(e_i) - \mu}^2$, where each $e_i$ is sampled independently and uniformly from a set $E$, is minimized at $\mu=\frac{1}{|E|}\sum_{r \in E} \Z(r)$.
\end{proposition}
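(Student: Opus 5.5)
By linearity of expectation and independence of the $e_i$, the expected sum reduces to $n$ times the single-coordinate quantity $f(\mu)\eqdef\E\left[(\Z(e)-\mu)^2\right]$, where $e$ is uniform on $E$. So it suffices to minimize $f$ over $\mu\in\R$.

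Writing $E=\pset{r_1,\dots,r_z}$ and $x_j\eqdef\Z(r_j)$, we have
\[
f(\mu)=\frac{1}{z}\sum_{j=1}^{z}(x_j-\mu)^2 = \mu^2 - \frac{2\mu}{z}\sum_{j=1}^z x_j + \frac{1}{z}\sum_{j=1}^z x_j^2 .
\]
This is a quadratic polynomial in $\mu$ with positive leading coefficient, hence strictly convex. Its unique minimizer is found by setting the derivative $2\mu - \frac{2}{z}\sum_j x_j$ to zero, which gives $\mu=\frac{1}{|E|}\sum_{r\in E}\Z(r)$.

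As a cross-check, one can use the bias--variance identity. Let $\bar x$ denote the mean of the $x_j$. Then
\[
f(\mu)=f(\bar x)+(\mu-\bar x)^2 ,
\]
because the cross term $\frac{2}{z}\sum_j (x_j-\bar x)(\bar x-\mu)$ vanishes. This also shows the minimum value is $n\sigma(E)^2$, which is what the later analysis uses.

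There is no real obstacle here. The only point needing care is that the expectation is taken over the integer representatives $\Z(r)\in\{0,\dots,p-1\}$, so the mean is the real average of these integers and not an element of $\F_p$. Multiplying $f$ by $n$ does not change where the minimum is attained.
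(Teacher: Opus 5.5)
Your proof is correct and follows the same route as the paper: linearity of expectation reduces the sum to $n$ times $\frac{1}{|E|}\sum_{r\in E}(\Z(r)-\mu)^2$, which is then minimized at the mean. You make the final minimization explicit, via the derivative or the bias--variance identity, where the paper just asserts it; note also that independence of the $e_i$ is not needed for this step, only that each $e_i$ is uniform on $E$.
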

\begin{proof}
By linearity of expectation, \[
\mathbb{E}\pset{
    \sum_{i=1}^n \pround{\Z(e_i) - \mu}^2
}
=
n
\mathbb{E}\pset{
     \pround{\Z(e_0) - \mu}^2
}
= \frac{n}{|E|}\sum_{r\in E}\pround{\Z(r)-\mu}^2,
\]
where the second equation is due to $e_0$ being sampled uniformly from $E$.
The latter expression is minimized at $\mu=\frac{1}{|E|}\sum_{r \in E} \Z(r)$.
\qed
\end{proof}
Setting $\mu$ to the average value of $\Z(E)$ means that the expected value of the sum $\sum_{i=1}^n \pround{\Z(e_i) - \mu}^2$ being minimized is equal to $n\sigma^2(E)$ where $\sigma^2(\Z(E))$ is the variance of a random variable sampled uniformly from $E$.

For computing the success probability of the reduction, it is more convenient to work with the median of the distance instead of the mean. This would give a good estimate on the \ListCVP{} distance in the reduction with the success rate $\rho=50\%$. For this purpose, we will heuristically assume that the mean approximates well the median. Note that taking the median commutes with squaring, so that the median of the distance is equal to the square root of the median of the squared distance, which by the heuristic coincides with the mean of the squared distance, which is in turn minimized and calculated in \autoref{st:mean-squared}. We will also evaluate the quality of this heuristic experimentally.

\begin{heuristic}\label{heur:median}
The median value $M_{E,n}$ of the distance $d = \enorm{\varphi(\ev) - \vv}$ (using $\mu=\frac{1}{|E|}\sum_{r \in E} \Z(r)$) is well approximated by the square root of the mean of $d^2$: \[
M_{E,n} = \sqrt{\frac{n}{|E|} \cdot
\sum_{r \in E}^z {(\Z(r) - \mu)^2}}.
\]
\end{heuristic}

\medskip

\if10
We now show that the full distribution of the variable $X$ can be efficiently computed for a wide range of parameters.

\begin{proposition}
Let $X_i, 1 \le i \le n$, be independent variables uniformly distributed from $E \subseteq \Z$, $0 = \min{E} \le \max{E} = D_E$, $\mu = \frac{1}{|E|} \sum_{r_i\in E} r_i$.
Then, the distribution of the variable $X=\sqrt{\sum_{i=1}^n (X_i - \mu)^2}$ can be computed in time $\tilde{\mathcal{O}}(|E|^2 D_E^2 n)$.
\end{proposition}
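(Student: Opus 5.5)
The natural approach is dynamic programming over the coordinates, exploiting the fact that the squared distance takes only few distinct values. Since $E \subseteq \Z$ with $0=\min E$, $\max E = D_E$ and $\mu = \frac{1}{|E|}\sum_{r\in E} r$, we have $|E|\mu \in \Z$. Write $z=|E|$. Then each term $(X_i-\mu)^2 = (zX_i - z\mu)^2/z^2$ equals $w_i/z^2$, where $w_i = (zX_i - z\mu)^2$ is a nonnegative integer. Moreover $|zX_i - z\mu| \le zD_E$, so $w_i \le z^2 D_E^2$. It therefore suffices to compute the distribution of the integer random variable $W=\sum_{i=1}^n w_i$. The distribution of $X$ then follows immediately, since $X = \sqrt{W}/z$ is a monotone transformation: the atoms of $W$ map bijectively onto the atoms of $X$ with the same probabilities.

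First, compute the distribution of a single $w_i$. Iterating over the $z$ elements of $E$ gives a probability vector $P_1$ supported on $\{0,\dots,z^2D_E^2\}$ with at most $z$ nonzero entries, in time $\mathcal{O}(z)$. Then set $P_j = P_{j-1} * P_1$, a discrete convolution. The support of $P_j$ is contained in $\{0,\dots,j z^2 D_E^2\}$. Because $P_1$ has only $z$ nonzero entries, each step can be done directly: for each of the at most $jz^2D_E^2+1$ entries of $P_{j-1}$ and each of the $z$ atoms of $P_1$, add the product into $P_j$. This costs $\mathcal{O}(z \cdot j z^2 D_E^2)$ per step, so $\mathcal{O}(z^3D_E^2n^2)$ in total. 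That does not match the target bound, so the plan must be refined.

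The main obstacle is reaching $\tilde{\mathcal{O}}(|E|^2D_E^2n)$ rather than a bound quadratic in $n$. The plan is to replace naive iteration with repeated squaring via FFT-based polynomial multiplication. View $P_1$ as a polynomial $f(x)=\frac1z\sum_{r\in E} x^{(zr-z\mu)^2}$ of degree at most $z^2D_E^2$, and compute $f^n$ by square-and-multiply. Each multiplication involves polynomials of degree at most $nz^2D_E^2$, so by FFT it costs $\tilde{\mathcal{O}}(nz^2D_E^2)$. Since there are $\mathcal{O}(\log n)$ multiplications, the total is $\tilde{\mathcal{O}}(|E|^2D_E^2n)$. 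If the degrees roughly double each round, the cost is dominated by the last multiplication, giving the same bound.

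I would treat precision as a side remark. Floating-point FFT suffices for the intended use, and exact rational arithmetic only adds polylogarithmic factors in bit length under the soft-$\mathcal{O}$ if one works modulo suitable primes with $z^n$ as a common denominator. Strictly speaking that factor is $\mathcal{O}(n\log z)$, so I would state the bound in the arithmetic-operations model. Finally, reading off $\Pr[X\le R]$ for all $R$ is a prefix sum over the coefficients, which costs only linear time in the degree.
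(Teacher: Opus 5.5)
Your proposal is correct and is essentially the paper's proof: scale by $z=|E|$ so that $z^2X^2$ is an integer variable bounded by $U=nz^2D_E^2$, then read its distribution off the coefficients of $f(x)^n$ computed with FFT. Your $f$ sums over $E$ rather than over the set of squared deviations, so it also handles coinciding values correctly. The only difference is that the paper uses one forward transform of length exceeding $U$, a pointwise $n$-th power, and one inverse transform, for $\mathcal{O}(U\log U)$ operations; your square-and-multiply costs an extra $\log n$ factor, which the $\tilde{\mathcal{O}}$ absorbs anyway.
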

\begin{proof}
It is sufficient to compute the distribution of $X^2 = \sum_{i=1}^n (X_i - \mu)^2$. First, let $z=|E|$ as usual and note that \[
z^2X^2 = \sum_{i=1}^n (zX_i - z\mu)^2
\]
is an integer variable, lower bounded by 0 and upper bounded by $U\eqdef nz^2D_E^2$. Define $S = \{(zr_i - z\mu)^2 \mid r_i \in E\}$ so that $z^2X^2$ is simply a sum of $n$ independent variables $Y_i$ sampled uniformly from $S$. The generating function of each $Y_i$ is then \[
f(x) = \frac{1}{z}\sum_{d \in F} x^d,
\]
and that of $z^2X^2$ is simply $f(x)^n$. Here, the coefficient of $x^j$ in $f(x)$ (resp. $f(x)^n$) is the probability that $Y_i = j$ (resp. $z^2X^2 = j$).
This polynomial $f(x)^n$ has degree at most $U$, and can be efficiently computed in time $\mathcal{O}(U\log{U})$ using Fast Fourier Transform (FFT): one forward transformation, one coordinate-wise exponentiation to the power $n$, and one inverse transformation.
\end{proof}

\fi


\subsection{Reduction to List-SVP}
\label{sec:listcvp-svp}

In this subsection, we describe a trick that allows one to convert the problem into \ListSVP{} / \SVP{} using the fact that the problem originates from linear codes, where the corresponding decoding and low-weigh codeword problems are more closely related.

\begin{proposition}
\label{st:ressd-to-lattice-svp}
Let $(\Hm,\sv)\in\F_p^{(n-k)\times n}\times\F_p^{n-k}$ be an instance of \ResSD{}
with solution entries in a set $E$. Let $\mu$ be a positive \emph{integer}. Then, there exists an $n$-dimensional lattice $\LL \subseteq \Z^n$ of volume $p^{n-k-1}$, constructible in time polynomial in $n$, such that there exists an embedding $\varphi$ of the solutions $\ev \in \F_p^n$, $\ev \Hm^T=\sv$, into the lattice, such that
\begin{equation}
\label{eq:code-lattice-norm}
\enorm{\varphi(\ev)} =
\sqrt{
\sum_{i=1}^n \pround{\Z(e_i) - \mu}^2
}.
\end{equation}
This embedding and its left inverse are computable in time $\mathcal{O}(n)$.
\end{proposition}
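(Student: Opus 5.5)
The plan is to mimic the proof of \autoref{st:ressd-to-lattice}, but to absorb the target vector into the lattice. Because $\mu$ is an integer, the shift by $\vec{\mu}=(\mu,\ldots,\mu)$ can be done inside the code. This is the standard trick that turns decoding into a low-weight codeword problem.

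Concretely, let $\uv=(1,\dots,1)\in\F_p^n$ and $m\eqdef\F_p(\mu)$. The vector $\ev-m\uv$ satisfies
\[
(\ev-m\uv)\transp{\Hm}=\sv-m\,\uv\transp{\Hm}.
\]
Let $\av\in\F_p^n$ be any solution of $\av\transp{\Hm}=\sv$ (found by linear algebra). Then $\ev-m\uv$ lies in the coset $\CC+(\av-m\uv)$, where $\CC=\CC^\perp(\Hm)$. Define the augmented code $\CC'\eqdef\CC+\Span(\av-m\uv)$, given by the generator matrix of $\CC$ with the extra row $\av-m\uv$.

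The case analysis goes as follows.
\begin{itemize}
\item If $\sv-m\,\uv\transp{\Hm}\neq\Omat$, then $\av-m\uv\notin\CC$, so $\CC'$ has dimension $k+1$. Set $\LL\eqdef\LL(\CC')$; by \autoref{prop:vol} its volume is $p^{n-k-1}$.
\item If $\sv=m\,\uv\transp{\Hm}$, the instance is degenerate: $\ev-m\uv$ is already a codeword. In that case I would either add an arbitrary extra vector outside $\CC$ to keep dimension $k+1$, or note that $\LL(\CC)$ already works. To match the stated volume exactly, I pick any vector outside $\CC$, which only enlarges the lattice and keeps the embedding valid.
\end{itemize}
To handle $\CC'$ that is not in systematic form, I use the remark following \autoref{def:LatCode} with a permutation.

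Define the embedding $\varphi(\ev)\eqdef\Z(\ev)-\vec{\mu}\in\Z^n$, which is integral since $\mu\in\Z$. Then
\[
\F_p(\varphi(\ev))=\ev-m\uv=(\ev-\av)+(\av-m\uv)\in\CC+\Span(\av-m\uv)=\CC',
\]
so $\varphi(\ev)\in\LL(\CC')$ by \autoref{prop:lcbij}. Equation \eqref{eq:code-lattice-norm} holds by definition, since the $i$-th coordinate of $\varphi(\ev)$ is $\Z(e_i)-\mu$. A left inverse is $\yv\mapsto\F_p(\yv+\vec{\mu})$. Both maps are clearly $\mathcal{O}(n)$, and the lattice is built by Gaussian elimination in polynomial time.

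The main obstacle is the dimension and volume bookkeeping. I need to check that adjoining the coset representative raises the dimension by exactly one, namely that $\av-m\uv\notin\CC$ exactly when the shifted syndrome is nonzero. I also need to handle the degenerate case cleanly so that the volume is exactly $p^{n-k-1}$ as stated. Everything else is a direct transcription of \autoref{st:ressd-to-lattice}.
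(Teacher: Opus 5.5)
Your proof is correct and builds exactly the paper's lattice. The paper works on the parity-check side: it applies Affine Shifting with $b=-\mu$, multiplies by an invertible $\Qm$ sending the shifted syndrome to $(1,0,\ldots,0)$, and deletes the first row; the kernel of the truncated matrix is precisely $\CC+\Span(\av-m\uv)$, the code you get by adjoining the coset representative to a generator matrix. Your generator-side description has the small advantage of handling the degenerate case $\sv=m\,\uv\transp{\Hm}$ explicitly, where the paper's $\Qm$ cannot exist.
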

\begin{proof}
First, we apply \autoref{st:affine-change} with $a=1$ and $b=-\mu$ to obtain a new \ResSD{} problem $(\Hm',\sv')$ with restriction $E'=\{r-\mu \mid r \in E\}$.
Second, choose an arbitrary invertible matrix $\Qm \in \F_p^{(n-k)\times(n-k)}$ such that
$$\Qm \sv^{\prime T} = (1,0,\ldots,0)^T = \sv^{\prime\prime T}.$$
Let $\Hm'' = \Qm \Hm'$. Clearly, the sets of solutions of $\ev\Hm^{\prime T}=\sv'$ and of $\ev\Hm^{\prime \prime T}=\sv''$ are identical. Now set $\Hm''' \in \F_p^{(n-k-1)\times n}$ to $\Hm''$ with the first row removed, which only increases the set of solutions in the new \ResSD{} instance $\ev\Hm'''=0$ (by an expected factor of $p$).\\
\\
Let $\CC$ be the linear code that admits $\Hm'''$ as a parity check matrix.
Define 
\begin{align}
\LL &= \LL(\CC) ~~ \subseteq \Z^n,\\
\varphi: \ev &\mapsto \Z(\ev) - \vec{\mu} ~~ \in \Z^n,
\end{align}
where $\vec{\mu}=(\mu,\ldots,\mu) \in \R^n$.
Clearly, $\LL$ is constructible efficiently.
Let $\ev \in \F_p^n$ be a solution to the original SD problem: $\ev \Hm^T = \sv$.
Then,
\begin{align*}
\Hm'' \F_p(\varphi(\ev))^T =
\Qm\Hm' \F_p(\varphi(\ev))^T = \Qm\Hm'(\ev - \vec{\mu}) = \Qm\sv^{\prime T} = (1,0,\ldots,0)^T,
\end{align*}
and so $\F_p(\varphi(\ev))$ belongs to $\CC$, which by \autoref{prop:lcbij} implies that $\varphi(\ev)$ belongs to $\LL$. The norm equation \eqref{eq:code-lattice-norm} follows directly from the definition of $\varphi$. It is evident that $\phi$ has left inverse and both are computable in linear time.
\qed
\end{proof}
This directly implies the \ListSVP{} reduction, with proof as in \autoref{st:list-cvp-reduction}.

\begin{corollary}\label{st:list-svp-reduction}
Let $(\Hm,\sv)\in\F_p^{(n-k)\times n}\times\F_p^{n-k}$ be an instance of \ResSD{} with the solution's entries in a set $E$.
Let a nonnegative integer $\mu$ and a positive real $R$ and define
\[
\rho = \Prob_{\substack{e_i \in \Z(E)\\1 \le i \le n}}\pset{\sum_{i=1}^n (e_i-\mu)^2 \le R^2}.
\]
Let a lattice $\LL \subseteq \Z^n$ be defined as in \autoref{st:ressd-to-lattice-svp}.
Then, with probability $\rho$ over solutions to the \ResSD{} problem,
a solution to \ListSVP{} over $\LL$ with distance upper bound $R$ and $N$ output vectors
can be converted into a solution to the \RegSD{} instance in time $\mathcal{O}(N)$.
\end{corollary}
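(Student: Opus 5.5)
The plan mirrors the proof of \autoref{st:list-cvp-reduction}, with \autoref{st:ressd-to-lattice-svp} in place of \autoref{st:ressd-to-lattice}.

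First, fix the integer $\mu$ and build $\LL$ and the embedding $\varphi:\ev\mapsto\Z(\ev)-\vec{\mu}$ as in \autoref{st:ressd-to-lattice-svp}. Let $\ev\in E^n$ be the solution of the original instance (the probability $\rho$ is over this solution). By that proposition, $\varphi(\ev)\in\LL$ and
\[
\enorm{\varphi(\ev)}^2=\sum_{i=1}^n(\Z(e_i)-\mu)^2 .
\]
So, with probability exactly $\rho$ (by the definition of $\rho$), we have $\enorm{\varphi(\ev)}\le R$, and $\varphi(\ev)$ lies in $\LL\cap\Ball{n,R}$. It therefore appears among the $N$ vectors output by the \ListSVP{} solver.

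Second, for each output vector $\yv$, apply the left inverse $\yv\mapsto\F_p(\yv+\vec{\mu})$ and then check the candidate. The check has two parts: whether every coordinate lies in $E$, and whether the original syndrome equation $\ev\Hm^T=\sv$ holds.

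The syndrome check matters. The lattice was built from $\Hm'''$, which has one row removed, so it also contains images of vectors that fail the dropped equation. For the same reason, the zero vector and other spurious short vectors may be listed and must be rejected. Under the problem's error model one would also check nonzeroness. A candidate passing both checks is a valid solution of the original instance. Through the bijections of \autoref{st:affine-change} it can be read back as a solution over the original restriction set.

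The main obstacle is the claimed cost $\mathcal{O}(N)$. Recovering a candidate and testing membership in $E$ takes $\mathcal{O}(n)$ per vector, which already gives $\mathcal{O}(nN)$, as in \autoref{st:list-cvp-reduction}. Checking the dropped equation naively costs $\mathcal{O}(n)$ per candidate as well, or $\mathcal{O}(n^2)$ for the full syndrome.

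I would handle this in two ways. The dropped condition is a single linear form $\Hm''_1$, so it is enough to check that one equation, since the others hold automatically for lattice vectors. To reach $\mathcal{O}(N)$, one option is to treat $n$ as a polynomial factor absorbed by the solver's output size. The other is to evaluate the linear form incrementally during enumeration, maintaining $\Hm''_1\cdot\F_p(\yv)$ alongside $\yv$.

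If neither works, I would settle for stating the bound as $\mathcal{O}(nN)$ up to the parity cost. I expect this to match the authors' intent, since the statement says ``with proof as in'' the \ListCVP{} corollary.
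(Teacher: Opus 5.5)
Your proposal is correct and follows the paper's route. The paper only says the proof is as for the \ListCVP{} corollary: the norm identity of the preceding proposition puts $\varphi(\ev)$ in $\LL\cap\Ball{n,R}$ with probability $\rho$, and each listed vector is shifted back by $\vec{\mu}$ and tested. Your extra check of the dropped first row of $\Hm''$ is genuinely needed, and the paper's one-line proof omits it: for example $\mathbf{0}$ when $\mu\in\Z(E)$, or $-\varphi(\ev)$ when $E$ is symmetric about $\mu$, appears in the list and maps into $E^n$ without satisfying $\ev\transp{\Hm}=\sv$. Your remark that the honest conversion cost is $\mathcal{O}(nN)$ rather than $\mathcal{O}(N)$ is also accurate.
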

This reduction is of theoretical and practical interest.
For parameters implying $R < \lambda_1(\LL)$, under GH, the constructed \ListSVP{} problem degenerates to a pure \SVP{} problem, which is an interesting result for a code-based problem.
The reduction provides an alternative to the Kannan embedding \cite{STOC:Kannan83a}, which is the most used technique for converting a \CVP{} instance into an \SVP{} instance. \SVP{} algorithms are often more performant than \CVP{} in practice. However, in our analysis below, we only consider strongest generic methods (\autoref{st:sieving} and \autoref{heur:cvp-cost}) which have the same complexity for \SVP{} and \CVP{} (perhaps having a small difference hidden in the omitted polynomial factors).


\subsection{Security analysis - Hybrid-ListCVP attack}
\label{sec:security-listcvp}
We discuss the hybrid attack - multiplicative truncation and \ListCVP{} reduction - applied to CROSS-R-SDP parameters. We are using the mean-based $\mu$ and the median $R$ from \autoref{sec:listcvp-mean-median}.
The results are summarized in \autoref{tab:list-cvp-cost}.

\begin{table}[htb!]
\centering
\setlength{\tabcolsep}{0.5em}
\begin{tabular}{cc|p{1.1cm}lp{1cm}p{1cm}|p{1cm}l}
\toprule
$(n,k)$ & $z'$ & Trunc. prob. & $|\LL \cap \Ball{n,R}|$ & Sieve cost & Enum. cost & Total time & Memory \\
\toprule
\multirow{7}{*}{(35, 21)} 
& 7 & $2^{-0.0}$ & $2^{124.7}$ & $2^{10.2}$ & $2^{130.0}$ & $2^{136.0}$& $2^{7.3}$ \\
& 6 & $2^{-7.8}$ & $2^{90.7}$ & $2^{10.2}$ & $2^{96.2}$ & $2^{110.0}$& $2^{7.3}$ \\
& 5 & $2^{-17.0}$ & $2^{56.1}$ & $2^{10.2}$ & $2^{62.0}$ & $2^{85.0}$& $2^{7.3}$ \\
& 4 & $2^{-28.3}$ & $2^{20.2}$ & $2^{10.2}$ & $2^{28.0}$ & $2^{62.2}$& $2^{7.3}$ \\
& 3 & $2^{-42.8}$ & $2^{-18.4}$ & $2^{10.2}$ & $2^{6.8}$ & $2^{57.0}$& $2^{7.3}$ \\
& 2 & $2^{-63.3}$ & $2^{-64.6}$ & $2^{10.2}$ & $2^{3.6}$ & $2^{77.5}$& $2^{7.3}$ \\
& 1 & $2^{-98.3}$ & $2^{-\infty}$ & $2^{10.2}$ & $2^{0.0}$ & $2^{112.5}$& $2^{7.3}$ \\
\midrule
\multirow{7}{*}{(127, 76)} 
& 7 & $2^{-0.0}$ & $2^{459.1}$ & $2^{37.1}$ & $2^{466.5}$ & $2^{472.5}$& $2^{26.4}$ \\
& 6 & $2^{-28.2}$ & $2^{335.7}$ & $2^{37.1}$ & $2^{343.5}$ & $2^{377.7}$& $2^{26.4}$ \\
& 5 & $2^{-61.6}$ & $2^{210.1}$ & $2^{37.1}$ & $2^{219.4}$ & $2^{287.1}$& $2^{26.4}$ \\
& 4 & $2^{-102.5}$ & $2^{79.9}$ & $2^{37.1}$ & $2^{105.2}$ & $2^{213.8}$& $2^{26.4}$ \\
& 3 & $2^{-155.2}$ & $2^{-60.3}$ & $2^{37.1}$ & $2^{30.3}$ & $2^{196.3}$& $2^{26.4}$ \\
& 2 & $2^{-229.5}$ & $2^{-227.8}$ & $2^{37.1}$ & $2^{7.1}$ & $2^{270.6}$& $2^{26.4}$ \\
& 1 & $2^{-356.5}$ & $2^{-\infty}$ & $2^{37.1}$ & $2^{0.0}$ & $2^{397.6}$& $2^{26.4}$ \\
\midrule
\multirow{7}{*}{(187, 111)} 
& 7 & $2^{-0.0}$ & $2^{671.4}$ & $2^{54.6}$ & $2^{679.4}$ & $2^{685.4}$& $2^{38.9}$ \\
& 6 & $2^{-41.6}$ & $2^{489.7}$ & $2^{54.6}$ & $2^{498.4}$ & $2^{545.9}$& $2^{38.9}$ \\
& 5 & $2^{-90.8}$ & $2^{304.8}$ & $2^{54.6}$ & $2^{316.5}$ & $2^{413.3}$& $2^{38.9}$ \\
& 4 & $2^{-151.0}$ & $2^{113.1}$ & $2^{54.6}$ & $2^{158.4}$ & $2^{315.3}$& $2^{38.9}$ \\
& 3 & $2^{-228.6}$ & $2^{-93.3}$ & $2^{54.6}$ & $2^{51.0}$ & $2^{287.2}$& $2^{38.9}$ \\
& 2 & $2^{-338.0}$ & $2^{-339.9}$ & $2^{54.6}$ & $2^{8.6}$ & $2^{396.6}$& $2^{38.9}$ \\
& 1 & $2^{-525.0}$ & $2^{-\infty}$ & $2^{54.6}$ & $2^{0.0}$ & $2^{583.6}$& $2^{38.9}$ \\
\midrule
\multirow{7}{*}{(251, 150)} 
& 7 & $2^{-0.0}$ & $2^{909.7}$ & $2^{73.3}$ & $2^{918.2}$ & $2^{924.2}$& $2^{52.2}$ \\
& 6 & $2^{-55.8}$ & $2^{665.8}$ & $2^{73.3}$ & $2^{675.0}$ & $2^{736.9}$& $2^{52.2}$ \\
& 5 & $2^{-121.8}$ & $2^{417.5}$ & $2^{73.3}$ & $2^{432.3}$ & $2^{560.2}$& $2^{52.2}$ \\
& 4 & $2^{-202.6}$ & $2^{160.3}$ & $2^{73.3}$ & $2^{226.7}$ & $2^{435.3}$& $2^{52.2}$ \\
& 3 & $2^{-306.8}$ & $2^{-116.9}$ & $2^{73.3}$ & $2^{81.0}$ & $2^{384.1}$& $2^{52.2}$ \\
& 2 & $2^{-453.6}$ & $2^{-447.8}$ & $2^{73.3}$ & $2^{10.9}$ & $2^{530.9}$& $2^{52.2}$ \\
& 1 & $2^{-704.6}$ & $2^{-\infty}$ & $2^{73.3}$ & $2^{0.0}$ & $2^{781.9}$& $2^{52.2}$ \\
\midrule
\bottomrule
\end{tabular}
\vspace{0.25em}
\caption{Hybrid-\ListCVP{} cost estimation for CROSS instances and a reduced instance used for experiments. The attack consists in probabilistic truncation from the initial $z=7$ to $z'$ and running sieve and enumeration algorithms. The complexities are adapted to achieve success rate of 90\%. Up to the correction, total time is equal to the product of the inverse of the truncation probability and the sum of sieve and enumeration costs.}
\label{tab:list-cvp-cost}
\end{table}

The first step in the attack is to truncate the instance from $z=7$ to $z' \le z$. According to \autoref{st:mult-truncation}, this has a probability of success $(z'/z)^n$, which means that $(z/z')^n$ attempts are needed on average. We increase this number of attempts by a factor of four to reach the success probability $1-e^{-4}\approx 98\%$. Here, we use the Poisson distribution as the limit of the binomial distribution. This defines the first factor in the final time complexity.

The second step consists in applying the \ListCVP{} reduction from \autoref{st:list-cvp-reduction} with the distance upper bound $R$ being the median of the distribution, approximated using \autoref{heur:median}. Since the median only ensures the success rate of $50\%$, we repeat the entire process four times, each time using multiplicative randomization (\autoref{st:mult-randomization}), increasing the success rate to $15/16\approx94\%$.

The third step is to solve the resulting \ListCVP{} instance by enumeration, using the complexity estimate from \autoref{st:enum}. 
The latter always includes the sieving cost, which actually dominates in instances where the target solution is expected to be unique. Due to the pruning used in \autoref{st:enum}, we repeat the process four times to reach $98\%$ success rate of the step. The total success rate is $90\%$. This step also dominates memory complexity which by \autoref{heur:cvp-cost} is $2^{0.208n+o(n)}$.

We remark that the analysis potentially omits polynomial complexity factors in sieving (which is relevant for instances degraded to \CVP{}) and in enumeration (which is relevant for \ListCVP{} cases). This is in part due to the lack of a precise analysis of the sieving complexity. We believe that this leads to an acceptable accuracy for security analysis purposes, and this simplification is in line with the common literature.

From the results, we can draw several conclusions. First, the best attacks are always obtained from the truncation to $z'=3$, which coincidentally is the largest $z'$ for which \ListCVP{} reduces to \CVP. Second, the attacks do not threaten the CROSS parameters even with potentially improved sieving and enumeration algorithms (note that, in current analysis, enumeration always incurs additional overhead in comparison to the expected number of vectors in the output). Indeed, for $z' \le 3$, the probability of successful truncation is always less than $2^{-n}$. For $z' \ge 4$, the ratio of the expected number of vectors in the considered ball $\Ball{n,R}$ to the probability of successful truncation is always larger than $2^n$.

\paragraph*{Experimental verification}
We performed experiments to verify the correctness of the reduction. The setup consists of a standard laptop with an Intel i7 CPU @ 3 GHz and 32 GB RAM. The software used is the SageMath computer algebra \cite{sagemath} and the \texttt{fpylll} library \cite{fpylll}.

For $n=35,k=21,z=4$ ($n,k$ are downscaled CROSS parameters and $z=4$ should be understood as $z'=4$ after multiplicative truncation), our complexity analysis predicts $2^{20.21}$ vectors within the mean-based center $\mu=3.75$ and the heuristic median radius $R=15.86$ (for the set $E=\pset{1,2,4,8}$).
In $10^5$ tries, we observed the image of the correct solution satisfy the bound in about 51\% of the instances, confirming \autoref{heur:median}.
Computing BKZ-$n$-reduced basis took negligible time.
Enumeration using \texttt{fpylll} without pruning took about 12 seconds (theoretical time complexity estimate $2^{32.34}$). Using exact values of the Gram-Schmidt matrix of the reduced basis (instead of the GSA), the computed complexity reduced to $2^{28.46}$ on average (this matches common observations on strong reduction quality for small dimensions).
Over several experiments, we always observed around $1.2$ million vectors within the target vector, closely matching the prediction $2^{20.21}$.
Whenever the reduction image of the initial solution satisfied the bound, it was correctly recovered among the enumerated vectors.

\section{Conclusion}
\label{sec:conclusion}

In this work, we have shown and studied reductions of the recent Restricted Syndrome Decoding problem to the more traditional Regular Syndrome Decoding problem with the particularity of light-regular errors, as well as to the lattice-based search (\SVP, \CVP) and enumeration problems (\ListSVP, \ListCVP). 
The reductions range from exact, deterministic and relatively large output instances (dimension $zn$), to heuristic, probabilistic and compact instances (dimension $n$).
We believe that our results shed new light on the \ResSD{} problem and open new research directions, such as finding more connections between the paradigms, finding new reductions, advancing analysis of the resulting instances.
For the CROSS signature, we obtained some new time-memory trade-off points, although our current reductions and attacks do not threaten its security.  

\paragraph*{Acknowledgments}
This work was funded by the Luxembourg National Research Fund (FNR), project PQseal C24/IS/18978392.

\FloatBarrier

\bibliographystyle{splncs04}
\bibliography{abbrev3,crypto,custom,codecrypto}

\appendix

\renewcommand{\theHsection}{appendix.\Alph{section}}
\renewcommand{\theHsubsection}{appendix.\Alph{section}.\arabic{subsection}}
\renewcommand{\theHsubsubsection}{appendix.\Alph{section}.\arabic{subsection}.\arabic{subsubsection}}

\renewcommand{\theHfigure}{appendix.\Alph{section}.\arabic{figure}}
\renewcommand{\theHequation}{appendix.\Alph{section}.\arabic{equation}}
\renewcommand{\theHtable}{appendix.\Alph{section}.\arabic{table}}

\section{Lattice enumeration methods}
\label{app:enum}

In the following, we summarize a standard heuristic framework for solving \ListSVP and  \ListCVP, based on enumeration of lattice points inside a given ball ($\LL \cap (\yv+\Ball{n,r})$ for a given basis of $\LL$ and $\yv$). Note that enumeration was previously developed and used for the purpose of solving SVP/CVP; the more modern and optimized sieving methods supersede enumeration. However, enumeration is still useful for solving the ``List'' problem variants.

\paragraph*{Orthogonalization}
Applying Gram-Schidmt orthogonalization to a full-rank lattice basis $\Bm = (\bv_1,\ldots,\bv_n) \in (\R^n)^n$ yields the orthogonal vectors $(\bvs_1,\ldots,\bvs_n)$ and projections $\mu_{i,j} \in \R$ of $\bv_i$ on $\bvs_j$, $1 \le i,j \le n$, satisfying $\bv_i = \sum_{j=1}^i \mu_{i,j}\bvs_j$. Define the orthogonal projections $\pi_i$, $1 \le i \le n$,
$$\pi_i:~~
\R^n \to \Span\pset{\bv_1,\ldots,\bv_{i-1}}^{\bot}:~~
\bv_k \mapsto \sum_{j=i}^n \mu_{k,j}\bvs_j.
$$
For $1 \le a \le b \le n$, we write $L_{[a,b]}$ to denote the lattice spanned by the vectors $\pi_a(\bv_a), \ldots, \pi_a(\bv_b)$.

\paragraph*{Reduced bases}
The \emph{root Hermite factor} $\delta_0$ of a vector $\xv$ in a lattice $\LL$ is defined as \[
\delta_0 \eqdef \pround{
\frac{\enorm{\xv}}{\pround{\Vol{\LL}}^{1/n}}
}^{1/n}
\]
(see \cite{EC:GamNgu08}). We are usually interested in $\delta_0$ of the first basis vector.

The lattice basis is said to be \emph{LLL-reduced} with factor $\varepsilon, 0 < \varepsilon < 1$  \cite{LLL82} if all $\mu_{i,j} \le 1/2$ and
its Gram-Schmidt orthogonalization satisfies $$
\enorm{\bvs_{i+1} + \mu_{i+1,i}\bvs_i}^2
\ge (1-\varepsilon) \enorm{\bvs_i}^2.
$$

The lattice basis is said to be \emph{BKZ-reduced} with block size $\beta \ge 2$ and factor $\varepsilon, 0 < \varepsilon < 1$ \cite{MP:SchEuc94} if it is LLL-reduced with factor $\varepsilon$ and for each $1 \le j \le n$, $\enorm{\bvs_j} = \lambda_1(L_{[j,\min(j+\beta-1,n)]})$.

The lattice basis is said to be \emph{HKZ-reduced}, if for each $1 \le j \le n$, $\enorm{\bvs_j} = \lambda_1(\pi_j(L))$. In particular, $\bv_1 = \bvs_1$ is the shortest vector in $L$.

\begin{heuristic}[Geometric Series Assumption, GSA \cite{STACS:Schnorr03}]
Let $\Bm$ be a BKZ-$\beta$- or HKZ- reduced basis and let $(\bvs_i)_i$ be its Gram-Schmidt orthogonalization with root Hermite factor $\delta_0$.
Then, for all $i$, and with $\delta = \delta_0^{-\frac{n}{n-1}}$,
\[
\enorm{\bvs_i}
\approx
\delta^{2(i-1)} \cdot\enorm{\bvs_{1}}
=
\delta^{2(i-1)} \cdot \delta_0^n \pround{\Vol{\LL}}^{1/n}
\approx
\delta^{2i-n-1}
\pround{\Vol{\LL}}^{1/n}.
\]
\end{heuristic}

\AU{Need to add more details/be more explicit, if time allows.}

We now briefly describe the generic enumeration algorithm \cite{AMS:FinPoh85,MP:SchEuc94,IWCC:HanPujSte11}.
The idea is to enumerate elements of projections $\pi_i(\LL)$ of the lattice orthogonally to basis vectors, in reverse order ($i$ going from $n$ to 1). Let $\xv^{(i)},\LL^{(i)},\yv^{(i)}$ be  projections of $\xv,\LL,\yv$ orthogonally to the span of $\bv_1,\ldots,\bv_{i-1}$. If $\xv \in \pround{\LL \cap (\yv+\Ball{n,R})}$, then $\xv^{(i)} \in \pround{\LL^{(i)} \cap (\yv^{(i)}+\Ball{n-i+1,R})}$ for all $i$. Furthermore, $\LL^{(i)} \cap (\yv^{(i)}+\Ball{n-i+1,R})$ can be obtained from  $\LL^{(i+1)} \cap (\yv^{(i+1)}+\Ball{n-i,R})$ in an efficient way. The complexity of the enumeration is bounded (up to polynomial factors) by the sum of the sizes of projected intersections $\LL^{(i)} \cap (\yv^{(i)}+\Ball{n-i+1,R})$, which by GH gives cost \[
T = \sum_{i=1}^n \frac{\Vol{\Ball{n-i+1,R}}}{\Vol{\LL^{(i)}}}.
\]
We have $\Vol{\LL^{(i)}}=\prod_{j=i}^n\enorm{\bvs_j}$ which under GSA is equal to $\delta^{i(n-i)} \pround{\Vol{\LL}}^{i/n}$.

The second component is \emph{pruning}. Observe that for most points $\xv^{(i)} \in \pround{\LL^{(i)} \cap (\yv^{(i)}+\Ball{n-i+1,R})}$  their projections will have shorter length than the vector themselves. Linear pruning consists in using the upper bound $R_d = R \cdot \sqrt{d/n}$ at the dimension $d=n+i-1$. The authors of \cite{EC:GamNguReg10} proved that the probability of a uniformly random vector on a sphere satisfying this constraint at all levels is exactly $1/n$. Therefore, randomizing the basis (see the same work) and repeating the search $cn$ times on for some constant $c$ allows us to list all but a negligible fraction of vectors. 

Finally, we consider an HKZ-reduced basis, which can be reduced to solving a polynomial time of SVP instances. Thus, the final cost includes the cost of sieving (\autoref{st:sieving}). This means
$\delta_0 = \pround{\sqrt{\frac{n}{2\pi e}}}^{1/n}$ and
$\delta = \delta_0^{-\frac{n}{n-1}}
= \pround{\sqrt{\frac{2\pi e}{n}}}^{1/(n-1)}
$.

\begin{remark}
The sieving algorithm from \autoref{st:sieving} outputs $2^{0.208n}$ short vectors. These short vectors could potentially be used to improve the enumeration complexity beyond using an HKZ-reduced basis. This is an interesting question beyond the scope of this work.
\end{remark}

\end{document}